\documentclass[12pt]{amsart}
\usepackage{amsopn}
\usepackage{amssymb, amscd}
\usepackage{mathrsfs}

\usepackage{amsmath,amssymb}

\usepackage[english]{babel}

\usepackage{graphics,color}

\usepackage{amsthm}
\usepackage{amsmath}

\usepackage{graphicx}

\topmargin -.5cm \evensidemargin 0cm \oddsidemargin 0cm \textwidth
16cm \textheight 23cm

\newcommand{\nc}{\newcommand}

\nc{\SO}{\mathrm{SO}} \nc{\Spe}{\mathrm{Sp}} \nc{\Sl}{\mathrm{SL}}
\nc{\SU}{\mathrm{SU}} \nc{\Or}{\mathrm{O}} \nc{\U}{\mathrm{U}}
\nc{\Gl}{\mathrm{GL}} \nc{\Se}{\mathrm{S}} \nc{\Cl}{\mathrm{Cl}}
\nc{\Spein}{\mathrm{Spin}} \nc{\Pin}{\mathrm{Pin}}
\nc{\G}{\mathrm{GL}_n(\RR)} \nc{\g}{\mathfrak{gl}_n(\RR)}

\nc{\vs}{\vspace{.2cm}} \nc{\vsp}{\vspace{1cm}}

\nc{\sqa}{\mathscr{S}^{a}_q}
\nc{\sq}{\mathscr{S}_q}
\nc{\sqao}{\mathscr{S}^{a\mathscr{O}}_q}
\nc{\sqo}{\mathscr{S}^{\mathscr{O}}_q}
\nc{\sqan}{\mathscr{S}^{a}_{q,N}}
\nc{\sqn}{\mathscr{S}_{q,N}}

\numberwithin{equation}{section}

\theoremstyle{plain}
\newtheorem{theorem}{Theorem}[section]
\newtheorem{proposition}[theorem]{Proposition}
\newtheorem{corollary}[theorem]{Corollary}
\newtheorem{lemma}[theorem]{Lemma}

\theoremstyle{definition}

\theoremstyle{remark}
\newtheorem{remark}[theorem]{Remark}

\numberwithin{theorem}{section}

\title{Lie subalgebras of the matrix quantum pseudo differential operators }

\author{Karina Batistelli - Carina Boyallian}

\address{FaMAF y CIEM, Universidad Nacional de C\'ordoba, C\'ordoba, Argentina}
\email{khbatistelli@gmail.com - boyallia@mate.uncor.edu}


\begin{document}


\maketitle

\begin{abstract}
We give a complete description of the anti-involutions that preserve the  principal gradation of the algebra of matrix quantum pseudodifferential operators  and we describe the Lie subalgebras of its minus fixed points.
\end{abstract}


\section{Introduction}

The $W$-infinity algebras naturally arise in various physical theories, such as conformal field theory, the theory of quantum Hall effect, etc.  The $W_{1+\infty}$ algebra, which is the central extension of the Lie algebra $D$ of differential operators on the circle, is the most fundamental among these algebras.
The representations of the Lie algebra $W_{1+\infty}$ were first studied in \cite{KR}, where its irreducible quasifinite highest weight representations were characterized. At the end of that article, similar results were found for the central extension of the Lie algebra of quantum pseudo-differential operatos $\sq$, which contains as a subalgebra the $q-$analogue of the Lie algebra $\hat{D}$ , the algebra of all regular difference operators on $\mathbb{C}^{\times}$. Here and further $q$ is not a root of unity.

In \cite{KWY}, certain subalgebras of the Lie algebra $D$ were considered, and it was shown that there are, up to conjugation, two anti-involutions $\sigma_{\pm}$ on $D$, which preserve the principal gradation. These results were extended to the matrix case in \cite{BL01}, where a complete description of the anti-involutions of the algebra $D^N$ of the $N \times N$-matrix differential operators on the circle preserving the principal $\mathbb{Z}$-gradation was given.

Analogously, in \cite{BL05} it was shown that there is a family of anti-involutions $\sigma_{\epsilon, k}$ on $\sq$ ($\epsilon =\pm 1$, $k \, \in \, \mathbb{Z}$), up to conjugation, preserving the principal gradation. The goal of this paper is to extend these results to the matrix case, where the global image seems to be richer and more complex.

The paper is organized as follows: In Sect. 2 we give a complete description of the anti-involutions of the algebra $\sqn$ of $N \times N$-quantum pseudodifferential operators, preserving the principal $\mathbb{Z}$-gradation. For each $n \, \in \, \mathbb{N}$ with $n \leq N$ we obtain, up to conjugation, two families of anti-involutions that show quite different results when $n=N$ and $n<N$. To exhibit their differences in detail, they are studied separately in Sect. 3 and 4 respectively.  In Sect. 3 the anti-involutions give us two families $\sqn^{\epsilon, r, N}$ of Lie subalgebras ($\epsilon =\pm 1$, $r, \, k \, \in \, \mathbb{Z}$) fixed by $-\sigma_{\epsilon, r,N}$. Then we give a geometric realization of $\sigma_{\epsilon, r,N}$, concluding that $\sqn^{+,r,N}$ is a subalgebra of $\sqn$ of orthogonal type and $\sqn^{-,r,N}$ is a subalgebra of $\sqn$ of symplectic type. In Sect. 4, the families $\sigma_{\epsilon, n}$,  with $1 \leq n < N$  give us two families of Lie subalgebras $\sqn^{\epsilon, n}$ ($\epsilon =\pm 1$) fixed by $-\sigma_{\epsilon, n}$. We give a geometric realization of $\sigma_{\epsilon, n}$, concluding that $\sqn^{+,n}$ is a subalgebra of $\sqn$ of type $o(n,N-n)$ and $\sqn^{-,n}$ is a subalgebra of $\sqn$ of type $osp(n,N-n)$.

\section{Quantum Pseudo-Differential Operators}

Consider $\mathbb{C}[z,z^{-1}]$ the Laurent polynomial algebra in one variable. 
We denote $\sqa$ the associative algebra of quantum pseudo-differential operators.
Explicitly, let $T_q$ denote the  operator on $\mathbb{C}[z,z^{-1}]$ given by

\begin{equation*}
T_q f(z)=f(qz),
\end{equation*}
where $q \, \in \, \mathbb{C}^{\times}=\mathbb{C} \backslash \{0\}$. An element of $\sqa$ can be written as a linear combination of operators of the form $z^{k}f(T_q)$, where $f$ is a Laurent polynomial in $T_q$. The product in $\sqa$ is given by
\begin{equation}
(z^{m}f(T_q))(z^{k}g(T_q))=z^{m+k}f(q^{k}T_q)g(T_q).
\end{equation}

Now let $\sq$ denote the Lie algebra obtained from $\sqa$ by taking the usual commutator. Let  $\mathscr{S}'_q=[\sq,\sq]$. It follows:

\begin{equation*}
\sq=\mathscr{S}'_q \bigoplus \mathbb{C}T^{0}_q \quad \hbox{ (direct sum of ideals).}
\end{equation*}



Let $N$ be a positive integer. As of this point, we shall denote by $Mat_N K$ the associative algebra of all $N \times N$ matrices over an algebra $K$ and $E_{ij}$ the standard basis of $Mat_N \mathbb{C}$.

Let $\sqan=\sqa \otimes Mat_N \mathbb{C}$ be the associative algebra of all quantum matrix pseudodifferential operators, namely the operators on $\mathbb{C}^N[z, z^{-1}]$ of the form

\begin{equation}
E=e_k(z)T^k_q+e_{k-1}(z)T^{k-1}_q + \cdots + e_0(z),
\hbox{ where } e_k(z) \, \in \, Mat_N \mathbb{C}[z, z^{-1}].
\end{equation}

In a more useful notation, we write the pseudodifferential operators as linear combinations of elements of the form $z^k f(T_q)A$, where $f$ is a Laurent polynomial, $k \, \in \, \mathbb{Z}$ and $A \, \in \, Mat_N \mathbb{C} $. The product in $\sqan$ is given by

\begin{equation}
(z^{m}f(T_q)A)(z^{k}g(T_q)B)=z^{m+k}f(q^{k}T_q)g(T_q)AB.
\end{equation}

Let $\sqn$ denote the Lie algebra obtained from $\sqan$ with the bracket given by the commutator, namely:

\begin{equation}
[z^{m}f(T_q)A, z^{k}g(T_q)B]=z^{m+k}(f(q^{k}T_q)g(T_q)AB-f(T_q)g(q^m T_q)BA).
\end{equation}

The elements $z^k T^m_q E_{ij} \, (k \, \in \, \mathbb{Z}, m \, \in \, \mathbb{Z_{+}}, i,j \, \in \, \{1,\cdots,N\})$ form a basis of $\sqn$.

%
%
%
%
%

Define the \textit{weight} on $\sqn$ by

\begin{equation}
wt z^k f(T_q)E_{ij}=kN+i-j.
\end{equation}

This gives  the \textit{principal} $\mathbb{Z}$ -gradation of $\sqan$ and $\sqn$, the latter of which is given by $\sqn = \bigoplus_{j \, \in \, \mathbb{Z}} \mathscr{S}_{q,N,j}$. This allows the following triangular decomposition

\begin{equation*}
\sqn= \mathscr{S}_{q,N,+} \bigoplus \mathscr{S}_{q,N,0} \bigoplus \mathscr{S}_{q,N,-},
\end{equation*}

where $\mathscr{S}_{q,N,+}= \bigoplus_{j \, \in \, \mathbb{ Z}_{<0}} \mathscr{S}_{q,N,j}$ and $\mathscr{S}_{q,N,-}= \bigoplus_{j \, \in \, \mathbb{ Z}_{<0}} \mathscr{S}_{q,N,j}$.

An $\textit{anti-involution}$ $\sigma$ of $\sqan$ is an involutive anti-automorphism of $\sqan$, ie,
$\sigma^2= Id$, $\sigma(\mathit{a}x+\mathit{b}y)=\mathit(a)\sigma(x)+\mathit(b)\sigma(y)$ and $\sigma(xy)=\sigma(y)\sigma(x)$, for all $a, \, b \, \in \, \mathbb{C}$ and $x, y \, \in \, \sqan$. From now on we will assume that $|q|\neq1$.

As we intend to classify the anti-involutions of $\sqan$ preserving its principal gradation, we shall introduce some notation. For each $n$, $ 1 \leq n \leq N$, define the permutation $\pi_n$ in $S_N$ by

\begin{equation}\label{eq:matriz}
\begin{array}{ccccccccc}
1 & 2 & \cdots & n-1 & n & n+1 & \cdots & N-1 & N \\
\downarrow & \downarrow &  & \downarrow & \downarrow & \downarrow & & \downarrow & \downarrow \\
n & n-1 & \cdots & 2 & 1 & N & \cdots & n+2 & n+1
\end{array}
\end{equation}

Let us fix $n$, $1 \leq n < N$,   $B$  and $c=\{c_{i,j}\}$, $c_{i,j} \, \in \, \mathbb{C}$, $i>j$, and write

\begin{equation*}
\delta_{i \leq n} = \begin{cases} 1 & \text{ if } i \leq n \\
								 0 & \text{ if } i > n.
\end{cases}
\end{equation*}

We define $\sigma=\sigma_{\pm,B,c,n}$ in $\sqan$ by the following formulas:

\begin{equation}\label{eq:generators}
\sigma(E_{ii})=E_{\pi_n(i),\pi_n(i)},
\end{equation}

\begin{equation*}
\sigma(zE_{ii})=  \pm zE_{\pi_n(i),\pi_n(i)}
\end{equation*}


\begin{equation*}
\sigma(T_qE_{ii})=Bq^{-1+\delta_{i \leq n}}T^{-1}_qE_{\pi_n(i),\pi_n(i)}
\end{equation*}

\vspace{5mm} 
\begin{equation*}
(i>j) \quad \sigma(E_{i,j})= \begin{cases}
c_{i,j} E_{\pi_n(j),\pi_n(i)} & \text{if} \quad i \leq n \quad \textit{or} \quad  j>n  \\
zc_{i,j} E_{\pi_n(j),\pi_n(i)} & \text{if} \quad i>n \quad \textit{and} \quad  j \leq n
\end{cases},
\end{equation*}

\vspace{5mm} 

\begin{equation*}
(i<j) \quad \sigma(E_{i,j})= \begin{cases}
c^{-1}_{j,i} E_{\pi_n(j),\pi_n(i)} & \text{if} \quad i>n \quad \textit{or} \quad  j \leq n  \\
z^{-1}c^{-1}_{j,i} E_{\pi_n(j),\pi_n(i)} & \text{if} \quad i \leq n \quad \textit{and} \quad  j>n
\end{cases},
\end{equation*}

\vspace{5mm} 


\begin{theorem} \label{Teoremita}
Let $1 \leq n < N$. $\sigma=\sigma_{\pm,B,c,n}$ defined on generators by ~\eqref{eq:generators} extends to an anti-involution on $\sqan$ which preserves the principal $\mathbb{Z}$-gradation if and only if

\begin{subequations}
\begin{equation} \label{eq:productosc}
c_{ij}=c_{i,i-1}c_{i-1,i-2}\cdots c_{j+1,j}
\end{equation}

\begin{equation}\label{eq:lamolesta}
\begin{cases} c_{i,j}c_{\pi_n(j),\pi_n(i)}=1 & \text{if} \quad i \leq n \quad \textit{or} \quad j>n \\
			  c_{i,j}c^{-1}_{\pi_n(i),\pi_n(j)}=\pm 1 & \text{if} \quad i>n \quad \textit{and} \quad  j \leq n
\end{cases}
\end{equation}
\end{subequations}

Moreover, any anti-involution $\sigma$ of $\sqan$ which preserves the principal $\mathbb{Z}$-gradation is one of them.
\end{theorem}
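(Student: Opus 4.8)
The plan is to regard $\sqan$ as the associative algebra presented by the elements occurring in \eqref{eq:generators}—the idempotents $E_{ii}$, the weight-$N$ elements $zE_{ii}$, the weight-$0$ elements $T_qE_{ii}$, and the off-diagonal $E_{ij}$—subject to the relations these satisfy: the matrix-unit relations $E_{ab}E_{cd}=\delta_{bc}E_{ad}$, the factorizations $E_{ij}=E_{i,i-1}\cdots E_{j+1,j}$ for $i>j$, the localized commutation $(T_qE_{ii})(zE_{ii})=q(zE_{ii})(T_qE_{ii})$, and the relations saying that $z$ and $T_q$ commute with the constant matrices. Since the product rule shows that these elements (together with the inverses of the $z$- and $T_q$-parts) generate $\sqan$, a linear map defined on them extends to a gradation-preserving anti-involution exactly when it (a) preserves weight, (b) reverses each defining relation, and (c) squares to the identity on generators. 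I would first dispose of (a): using $\pi_n(i)-\pi_n(j)=i-j$ within a single block and observing that the extra factor of $z$ (resp. $z^{-1}$) in \eqref{eq:generators} appears precisely when the pair $(i,j)$ crosses the boundary $n$, a direct weight count shows every formula lands in the same graded component as its argument.

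For (b) and (c), I would isolate the two families of nontrivial checks, which yield \eqref{eq:productosc} and \eqref{eq:lamolesta} respectively. Applying $\sigma$ to $E_{ij}=E_{i,i-1}\cdots E_{j+1,j}$ and reversing the order, the only subdiagonal factor that crosses the boundary is $E_{n+1,n}$, contributing the single power of $z$; matching the resulting coefficient against the defining value of $\sigma(E_{ij})$ forces the multiplicativity \eqref{eq:productosc}, and conversely \eqref{eq:productosc} renders the definition consistent. The condition $\sigma^2=\mathrm{Id}$ is automatic on $E_{ii}$ (since $\pi_n^2=\mathrm{Id}$), on $zE_{ii}$ (the sign squares to $1$) and on $T_qE_{ii}$ (since $\delta_{i\le n}=\delta_{\pi_n(i)\le n}$); on an off-diagonal $E_{ij}$ it returns $c_{ij}c_{\pi_n(j),\pi_n(i)}E_{ij}$ in the non-crossing case and $\epsilon\,c_{ij}c^{-1}_{\pi_n(i),\pi_n(j)}E_{ij}$ in the crossing case, so $\sigma^2=\mathrm{Id}$ is equivalent to \eqref{eq:lamolesta}, the sign $\epsilon=\pm$ being exactly the one in $\sigma(zE_{ii})$. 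I would also record that the localized commutation is preserved for every nonzero $B$: the substitution $T_q\mapsto T_q^{-1}$ in $\sigma(T_qE_{ii})$ inverts the factor $q^{k}$ produced by the product rule, so both sides transform compatibly and $B$ stays a free parameter. This settles both implications of the first assertion.

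For the final sentence I would start from an arbitrary gradation-preserving anti-involution $\sigma$ and reconstruct the data $(\pm,B,c,n)$. Restricting $\sigma$ to the commutative weight-zero part $\bigoplus_i\mathbb{C}[T_q,T_q^{-1}]E_{ii}$ shows that $\sigma$ permutes the primitive idempotents by an involution $\tau$, with $\sigma(E_{ii})=E_{\tau(i),\tau(i)}$ and $\sigma(T_qE_{ii})=\lambda_iT_q^{s_i}E_{\tau(i),\tau(i)}$; substituting this into the commutation with the weight-$N$ element $zE_{ii}$ forces $s_i=-1$ and $\sigma(zE_{ii})=zg_i(T_q)E_{\tau(i),\tau(i)}$. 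I would then use weight preservation on the weight-one generators: an element supported at $(\tau(i),\tau(i+1))$ of weight $1$ exists only if $\tau(i)-\tau(i+1)\equiv1\pmod N$, which forces $\tau$ to be the cyclic reflection $i\mapsto (n+1-i)\bmod N$, i.e. $\tau=\pi_n$ of \eqref{eq:matriz}, with $n=\tau(1)$ marking where the $z$-shift sits. Finally, the relations $zE_{ij}=(zE_{ii})E_{ij}=E_{ij}(zE_{jj})$ together with the commuting of $z$ with each $E_{ij}$ pin the $T_q$-exponents of the off-diagonal images to zero and make the $g_i$ block-constant monomials; applying $\sigma^2=\mathrm{Id}$ to the boundary generator $E_{n+1,n}$ then forces that monomial to be a constant $\pm1$, so $\sigma(zE_{ii})=\pm zE_{\pi_n(i),\pi_n(i)}$ with one global sign. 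Reading off $c_{i,i-1}$ from $\sigma(E_{i,i-1})$ and $B$ from $\sigma(T_qE_{ii})$, and invoking the first part, then identifies $\sigma$ with $\sigma_{\pm,B,c,n}$.

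The hard part will be this completeness direction, and within it the two rigidity statements: that the idempotent involution must be exactly one of the $\pi_n$, and that no spurious power of $T_q$ survives in $\sigma(zE_{ii})$. The first is handled cleanly by the $\bmod\,N$ bookkeeping on weight-one vectors. The second is the genuine obstacle, because the gradation and the commutation relation alone leave a one-parameter monomial ambiguity (morally, conjugation by a diagonal matrix of powers of $T_q$); it is only the involutivity $\sigma^2=\mathrm{Id}$, evaluated on the single generator $E_{n+1,n}$ that crosses the block boundary, which eliminates it—and this is precisely the step that uses the hypothesis $1\le n<N$.
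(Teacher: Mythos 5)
Your overall architecture for the completeness direction (present $\sqan$ by the localized generators $E_{ij}$, $z^{\pm1}E_{ii}$, $T_q^{\pm1}E_{ii}$; derive $s_i=-1$ from reversing the $q$-commutation relation; pin down $\tau=\pi_n$ by weight bookkeeping on the weight-one generators) is sound and in places cleaner than the paper's, which reaches the same conclusions by evaluating $\sigma^2$ on products. But your Step 4 has a genuine gap, and it sits exactly where you yourself locate the difficulty. The relations $(zE_{ii})E_{ij}=E_{ij}(zE_{jj})$ and $(T_qE_{ii})E_{ij}=E_{ij}(T_qE_{jj})$ do \emph{not} ``pin the $T_q$-exponents of the off-diagonal images to zero'': writing $\sigma(E_{i+1,i})=c_iT_q^{u_i}E_{\pi(i),\pi(i+1)}$ and $\sigma(zE_{ii})=A_izT_q^{r_i}E_{\pi(i),\pi(i)}$, these relations only yield $r_i\equiv r$ constant, $B_{i+1}=q^{-\delta_{i,n}}B_i$, and $A_i=A_{i+1}q^{u_i}$ ($i\neq n$), leaving all the $u_i$ and $r$ free. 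This is unavoidable: conjugation by $\mathrm{diag}(T_q^{a_1},\dots,T_q^{a_N})$ preserves the principal gradation and every defining relation while shifting the $u_i$ by $a$-differences, so no amount of relation-checking can kill these exponents --- precisely the ``one-parameter monomial ambiguity'' you mention. Only involutivity can remove it, and your proposed single check does not suffice: computing $\sigma^2(E_{n+1,n})$ (factor $zT_q^{u_n}E_{1,N}$ through $E_{1,N}$, $T_q^{u_n}E_{NN}$ and $zE_{NN}$, and expand $\sigma(E_{1,N})$ via the superdiagonal factorization) produces exactly \emph{one} exponent equation, of the form $r=u_n+\sum_{i=1}^{N-1}u_i$, together with one scalar equation on the constants. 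One equation in the unknowns $r,u_1,\dots,u_{N-1}$ cannot force $r=0$, $u_i=0$ and $A=\pm1$.

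The paper's proof gets these vanishings by imposing $\sigma^2(z^kT_q^l E_{i,j})=z^kT_q^lE_{i,j}$ for \emph{all} boundary-crossing pairs $i>n\geq j$ and for several values of $(k,l)$: the case $k=1,\,l=0$ forces the crossing exponents $m_{i,j}$ to be independent of $(i,j)$ (equivalently $u_l=0$ for $l\neq n$), $k=0$ gives $2m+r=0$, and $k=2,\,l=0$ combined with $A^2(Bq^{-1})^r=1$ yields $q^{2m}=1$, whence $m=r=0$ and then $A=\tilde A=\pm1$ --- each step using that $q$ is not a root of unity. In your framework these conditions are not generator-level checks; they are consequences of generator-level involutivity on \emph{all} adjacent generators together with the full web of relations, and extracting $r=0$ from them requires combining the $\sigma^2$ conditions on $E_{i+1,i}$ for every $i$ (crossing and non-crossing), on $zE_{ii}$, and on the non-adjacent crossing pairs, not just on $E_{n+1,n}$. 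As written, your argument would also (incorrectly) let the $n=N$ case collapse to $u_i=0$ by the same one-line mechanism, whereas there the analogous constraints leave $r$ genuinely free. A smaller omission in the same step: you never justify that the images $\sigma(E_{i+1,i})$, $\sigma(zE_{ii})$, $\sigma(T_qE_{ii})$ are monomials in $T_q$; this needs the unit argument $\sigma(x)\sigma(y)=E_{\pi(i),\pi(i)}$ that the paper carries out in Steps 2--4.
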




The proof will mainly consist of several steps making use of the involutive property of $\sigma$ and the relations between the generators $E_{i,j}$.

$\newline$
\begin{proof}
Fix $i \, \in \, \{ 1, \cdots, N \}$.


$\mathit{Step \, 1.}$

Because $\sigma$ should preserve the principal $\mathbb{Z}$-gradation, we have $\sigma(E_{i,i})=\sum^N_{j=1} U_{i,j}(T_q) E_{j,j}$. Given the fact that $\sigma$ is an anti-involution, we get $\sigma(E_{i,i})=\sigma(E_{i,i}E_{i,i})=\sigma(E_{i,i})\sigma(E_{i,i})=\sum^N_{j=1} (U_{i,j}(T_q))^2E_{j,j}$, so $U_{i,j}(T_q)=U^2_{i,j}(T_q)$. Taking into consideration the positive and negative degrees of these Laurent polynomials, we arrive at $U_{i,j}(T_q)=a_{i,j}$, where $a_{i,j}$ are constant elements such that $a^2_{i,j}=a_{i,j}$. This gives us $a_{i,j}=0$ or $a_{i,j}=1$ for every $i,j \, \in \, \{1,...,N\}$.
We also know that $E_{i,i}=\sigma^2(E_{i,i})=\sum^N_{j=1}a_{i,j}\sum^N_{k=1}a_{j,k}E_{k,k}$. So,  $1=\sum^N_{j=1}a_{i,j}a_{j,i}$ and $0=\sum^N_{j=1}a_{i,j}a_{j,k}$ for $k \neq i$. So, for each $i$ there exists a unique $j_i$ such that $a_{i,j_i}=a_{j_i,i}=1$ and $a_{i,j}a_{j,i}=0$ for any $j \neq j_i$. And $a_{i,j}a_{j,k}=0$ for every $j, i$ and $k \neq i$. In particular, $a_{i,j_i}a_{j_i,k}=0$ for $k \neq i$, so $a_{j_i,k}=0$ for any $k \neq i$, obtaining that $\sigma(E_{i,i})=E_{j_{i},j_{i}}$. Due to the injectivity of $\sigma$, $\pi(i):= j_{i}$ is a permutation in $S_N$, and since $\sigma$ is an involution, we have $\pi^2=id$.
 $ $\newline

 $\mathit{Step \, 2}$.

 Again, due to the fact that $\sigma$ should preserve the principal $\mathbb{Z}$-gradation, we may assume that
$\sigma(T_qE_{i,i})=\sum^N_{j=1} P_{i,j}(T_q) E_{j,j}$ and $\sigma(T_q^{-1}E_{i,i})=\sum^N_{j=1} H_{i,j}(T_q) E_{j,j}$. So,

\begin{align*}
\sigma(T_qE_{i,i})
&= \sigma(T_qE_{i,i}E_{i,i})=\sigma(E_{i,i})\sigma(T_qE_{i,i}) \\
&= E_{\pi(i), \pi(i)} \left( \sum^N_{j=1} P_{i,j}(T_q)E_{j,j} \right)	\\
&= P_{i,\pi(i)}(T_q)E_{\pi(i), \pi(i)}.
\end{align*}

Proceding similarly with  $\sigma(T_q^{-1}E_{i,i})$, we have
\begin{equation*}
\sigma(T_q^{-1}E_{i,i}) =H_{i,\pi(i)}(T_q)E_{\pi(i), \pi(i)}.
\end{equation*}

Combining these two equations, we have
\begin{align*}
E_{\pi(i), \pi(i)}
&=\sigma(E_{i,i})=\sigma(T_q^{-1}E_{i,i}T_qE_{i,i}) \\
&=\sigma(T_qE_{i,i})\sigma(T_q^{-1}E_{i,i}) \\
&=P_{i,\pi(i)}(T_q)H_{i, \pi(i)}(T_q) E_{\pi(i), \pi(i)}
\end{align*}

So, $1=P_{i,\pi(i)}(T_q)H_{i,\pi(i)}(T_q)$ and, as consecuence, they must be units of the Laurent polynomial ring. Therefore, we can assume $P_{i}(T_q):= P_{i,\pi(i)}(T_q)=B_iT^{k_i}_q$ and $H_{i,\pi(i)}(T_q)=B^{-1}_iT^{-{k_i}}_q$, with $B_i \, \in \, \mathbb{C^\times}$ and $k_i \, \in \, \mathbb{Z}$.
So, $\sigma(T_q^{-1}E_{i,i})$ is then determined by $\sigma(T_qE_{i,i})$.

Now, let us note that we can write
$T^k_qE_{i,i}=T^{k-1}_qE_{i,i}T_qE_{i,i}= \cdots =(T_q E_{i,i})^k$, for every $k \, \in \, \mathbb{Z}$.
Therefore,

\begin{align}
T_qE_{i,i}
&=\sigma^2(T_qE_{i,i})=\sigma(B_i T^{k_i}_q E_{\pi(i),\pi(i)}) \nonumber \\
&=B_i(\sigma(T_qE_{\pi(i), \pi(i)}))^{k_i} =B_iB^{k_i}_{\pi(i)}T^{k_ik_{\pi(i)}}_qE_{i,i}
\end{align}

So, $B_iB^{k_i}_{\pi(i)}=1$ and $k_ik_{\pi(i)}=1$ This gives us the following alternatives $k_i=k_{\pi(i)}=1$ or $k_i=k_{\pi(i)}=-1$.

$ $\newline

$\mathit{Step \, 3}$.

Since $wt(z^{\pm 1}E_{i,i})=\pm N$ and $\sigma$ should preserve the principal $\mathbb{Z}$-gradation, we can assume
$\sigma(zE_{i,i})=z\sum^N_{j=1}T_{i,j}(T_q)E_{j,j}$ and $\sigma(z^{-1}E_{i,i})=z^{-1}\sum^N_{j=1}\hat{T}_{i,j}(T_q)E_{j,j}$. Using a similar argument to the one used in Step 2 and denoting $T_{i,\pi(i)}(T_q) := T_{i}(T_q)$ and $\hat{T}_{i,\pi(i)}(T_q) := \hat{T}_{i}(T_q)$, we can deduce that $\hat{T}_j$ and $T_j=0$ for $j \neq \pi(i)$, and also $T_i(T_q)=A_i T^{r_i}_q$ and $\hat{T}_{i}(T_q)=C_iT^{-r_i}_q$, with $C_i=A^{-1}_iq^{r_i}$, $A_i \, \in \, \mathbb{C^{\times}}$ and $r_i \, \in \, \mathbb{Z}$. So,

\begin{align*}
zE_{i,i}
&=\sigma^2({zE_{i,i}})=\sigma(zA_iT^{r_i}_qE_{\pi(i), \pi(i)}) \\
&=A_i\sigma(T_qE_{\pi(i), \pi(i)})^{r_i}\sigma(zE_{\pi(i), \pi(i)}) \\
&=A_i A_{\pi(i)}B^{r_i}_{\pi(i)}q^{k_{\pi(i)}r_i}zT^{r_ik_{\pi(i)}+r_{\pi(i)}}_qE_{i,i}
\end{align*}

Therefore, we have $r_ik_{\pi(i)}+r_{\pi(i)}=0$ and $A_i A_{\pi(i)}B^{r_i}_{\pi(i)}q^{k_{\pi(i)}r_i}=1$. On the other hand,

\begin{align*}
zT_qE_{i,i}
&=\sigma^2({zT_qE_{i,i}})=\sigma(zA_iB_iT^{r_i+k_i}_qq^{k_i}E_{\pi(i), \pi(i)}) \\
&=A_iB_iq^{k_i}\sigma(T_qE_{\pi(i), \pi(i)})^{k_i+r_i}\sigma(zE_{\pi(i), \pi(i)}) \\
&=A_i A_{\pi(i)}B_i B^{r_i+k_i}_{\pi(i)}q^{k_i+k_{\pi(i)}(r_i+k_i)}zT^{k_{\pi(i)}(k_i+r_i)+r_{\pi(i)}}_qE_{i,i}
\end{align*}

We can therefore conclude that $k_{\pi(i)}(k_i+r_i)+r_{\pi(i)}=1$ and $A_i A_{\pi(i)}B_i B^{r_i+k_i}_{\pi(i)}q^{k_i+k_{\pi(i)}(r_i+k_i)}=1$. From this last equation and the previous step, we get $1=q^{k_i+1}$.

If $k_i=1$, then $q^2=1$. Since we assumed that $q$ is not a root of unity, it is easy to check that these are not anti-automorphisms. Therefore, $k_i=-1$ , $B_i = B_{\pi(i)}$ and $r_{\pi(i)}=r_i$.

By now,
\begin{equation*}
\sigma(E_{ii})=E_{\pi(i),\pi(i)},
\end{equation*}

\begin{equation*}
\sigma(zE_{ii})=A_izT^{r_i}_qE_{\pi(i),\pi(i)},
\end{equation*}


\begin{equation*}
\sigma(z^{-1}E_{ii})=A^{-1}_iq^{r_i}z^{-1}T^{r_i}_qE_{\pi(i),\pi(i)},
\end{equation*}


\begin{equation*}
\sigma(T_qE_{ii})=B_iT^{-1}_qE_{\pi(i),\pi(i)}
\end{equation*}

where
\begin{equation}\label{eq:aiapi}
A_i A_{\pi(i)}B^{r_i}_{j}q^{-r_i}=1,
\end{equation}

and also $B_i = B_{\pi(i)}$ and $r_{\pi(i)}=r_i$, for $A_i, A_{\pi_n(i)},B_i \, \in \, \mathbb{C^{\times}}$ and $r_i \, \in \, \mathbb{Z}$.

$ $\newline

$\mathit{Step 4}$.
Suppose $i>j$. As an implication of the $\mathbb{Z}$-gradation preservation property of $\sigma$, we have that

$\sigma(E_{i,j})=\sum^{N-i+j}_{l=1}C^{i,j}_l(T_q)E_{l+i-j,l}+\sum^{N}_{l=N-i+j+1}z\hat{C}^{i,j}_l(T_q)E_{l+i-j-N,l}$. Since

$\sigma(E_{i,i}E_{i,j})=\sigma(E_{i,j})\sigma(E_{i,i})=\sigma(E_{i,j})E_{\pi(i), \pi(i)}$, we can deduce

\begin{equation} \label{eq:(2.9)}
\sigma(E_{i,j})= \begin{cases} C^{i,j}(T_q)E_{\pi(i)+i-j,\pi(i)} & \text{if} \quad   \pi(i) \leq N-i+j \\
								z\hat{C}^{i,j}(T_q)E_{\pi(i)+i-j-N,\pi(i)} & \text{if} \quad   \pi(i) \geq N-i+j+1
								\end{cases}								
\end{equation}

where $C^{i,j}(T_q)=C^{i,j}_{\pi(i)}(T_q)$, and $\hat{C}^{i,j}(T_q)=\hat{C}^{i,j}_{\pi(i)}(T_q)$.

Similarly, if $i<j$ and
$\sigma(E_{i,j})=\sum^{j-i}_{l=1}z^{-1}\hat{S}^{i,j}_l(T_q)E_{N+l+i-j,l}+\sum^{N}_{l=j-i+1}S^{i,j}_l(T_q)E_{l+i-j,l}$,
we deduce

\begin{equation} \label{eq:(2.10)}
\sigma(E_{i,j})= \begin{cases} S^{i,j}(T_q)E_{\pi(i)+i-j,\pi(i)} & \text{if} \quad \pi(i) \geq j-i+1 \\
								z^{-1}\hat{S}^{i,j}(T_q)E_{\pi(i)+i-j+N,\pi(i)} & \text{if} \quad \pi(i) \leq j-i
							\end{cases}								
\end{equation}

where $S^{i,j}(T_q)=S^{i,j}_{\pi(i)}(T_q)$.

$\mathit{Case \, a}$. Let $i>j$, with $\pi(i) \leq N-i+j$:

\begin{equation*}
E_{i,j}=\sigma^2(E_{i,j})=\sigma(C^{i,j}(T_q)E_{\pi(i)+i-j,\pi(i)})=\sigma(C^{i,j}(T_q)E_{\pi(i)+i-j,\pi(i)+i-j}E_{\pi(i)+i-j,\pi(i)})
\end{equation*}

using \eqref{eq:(2.9)}, we must have $\pi(\pi(i)+i-j) \leq N-i+j $ because we would otherwise get $z$ in the right-hand side above, so

\begin{align*}
E_{i,j}
&=\sigma(E_{\pi(i)+i-j,\pi(i)})\sigma(C^{i,j}(T_q)E_{\pi(i)+i-j,\pi(i)+i-j}) \\
&=(C^{\pi(i)+i-j, \pi(i)}(T_q)E_{\pi(\pi(i)+i-j)+i-j, \pi(\pi(i)+i-j)} )(C^{i,j}(B_{\pi(i)+i-j}T^{-1}_q)E_{\pi(\pi(i)+i-j), \pi(\pi(i)+i-j)}) \\
&=C^{\pi(i)+i-j, \pi(i)}(T_q) C^{i,j}(B_{\pi(i)+i-j}T^{-1}_q)E_{\pi(\pi(i)+i-j)+i-j, \pi(\pi(i)+i-j)}.
\end{align*}

Then, $C^{i,j}(B_{\pi(i)+i-j}T^{-1}_q)$ and $C^{\pi(i)+i-j, \pi(i)}(T_q)$ are units of the Laurent polynomial ring and $\pi(j)=\pi(i)+i-j$ . Therefore, and because $B_i= B_{\pi(i)}$, we can write $C^{\pi(j), \pi(i)}(T_q) = c_{\pi(j), \pi(i)}T^{s_{\pi(j), \pi(i)}}_q$ and $C^{i,j}(B_jT^{-1}_q) = c_{i,j}T^{s_{i,j}}_q$, with

\begin{equation}\label{eq:latranca}
c_{i,j}.c_{\pi(j), \pi(i)}=1 \quad \text{ and }  -s_{\pi(j), \pi(i)}=s_{i,j}, \quad \text{ thus } \quad C^{i,j}(T_q) = c_{i,j}B^{s_{i,j}}_j T^{-s_{i,j}}_q.
\end{equation}

$ $\newline
$\mathit{Case \, b}$. Let $i>j$ and if  $\pi(i) \geq N-i+j+1$, in the same way, using simultaneously \eqref{eq:(2.9)} and \eqref{eq:(2.10)} in order to take care of the $z$ that appears in $\sigma(E_{i,j})$, we have $\pi(\pi(i)+i-j-N) \leq N + j-i$, thus:

\begin{align*}
E_{i,j}
&=\sigma^2(E_{i,j})=\sigma(zC^{i,j}(T_q)E_{\pi(i)+i-j-N,\pi(i)}) \\
&=\sigma(E_{\pi(i)+i-j-N,\pi(i)})\sigma(C^{i,j}(T_q)E_{\pi(i)+i-j-N,\pi(i)+i-j-N})\sigma(zE_{\pi(i)+i-j-N,\pi(i)+i-j-N}) \\
&=\left(z^{-1}\hat{S}^{\pi(i)+i-j-N, \pi(i)}(T_q)\right)\left(C^{i,j}(B_{\pi(i)+i-j-N}T^{-1}_q)\right) \times \\
& \qquad \qquad \qquad \qquad \qquad \left(A_{\pi(i)+i-j-N}zT^{r_{\pi(i)+i-j-N}}_q\right)
E_{\pi(\pi(i)+i-j-N)+i-j, \pi(\pi(i)+i-j-N)} \\
&=A_{\pi(i)+i-j-N}\hat{S}^{\pi(i)+i-j-N,\pi(i)}(qT_q)C^{i,j}(q^{-1}B_{\pi(i)+i-j-N}T^{-1}_q) \times \\
& \qquad \qquad \qquad \qquad \qquad \qquad \qquad \qquad \qquad \qquad \left(T^{r_i}_qE_{\pi(\pi(i)+i-j-N)+i-j, \pi(\pi(i)+i-j-N)}\right).
\end{align*}

Therefore, $j=\pi(\pi(i)+i-j-N)$ and we can assume $A_{\pi(j)}\hat{S}^{\pi(j), \pi(i)}(qT_q)T^{r_i}_q=d_{\pi(j), \pi(i)}T^{p_{\pi(j), \pi(i)}}_q$ and $C^{i,j}(q^{-1}B_jT^{-1}_q)=c_{i,j}T^{m_{i,j}}_q$, with

\begin{equation}\label{eq:n}
d_{\pi(j), \pi(i)}.c_{i,j}=1 \quad  \text{ and } \quad p_{\pi(j), \pi(i)}=-m_{i,j}, \quad \text{ thus } \quad C^{i,j}(T_q) = c_{i,j}q^{-m_{i,j}}B^{m_{i,j}}_jT^{-m_{i,j}}_q.
\end{equation}

$\mathit{Case \, c}$. Let $i<j$ and $\pi(i) \geq j-i+1$:

\begin{equation*}
E_{i,j}=\sigma^2(E_{i,j})=\sigma(S^{i,j}(T_q)E_{\pi(i)+i-j,\pi(i)})=\sigma(S^{i,j}(T_q)E_{\pi(i)+i-j,\pi(i)+i-j}E_{\pi(i)+i-j,\pi(i)})
\end{equation*}

using \eqref{eq:(2.10)}, we must have $\pi(\pi(i)+i-j) \geq j-i+1 $ in order to avoid getting $z^{-1}$ in the right-hand side above. So

\begin{align*}
E_{i,j}
&=\sigma(E_{\pi(i)+i-j,\pi(i)})\sigma(S^{i,j}(T_q)E_{\pi(i)+i-j,\pi(i)+i-j}) \\
&=\left(S^{\pi(i)+i-j, \pi(i)}(T_q)E_{\pi(\pi(i)+i-j)+i-j, \pi(\pi(i)+i-j)} \right)\left(S^{i,j}(B_{\pi(i)+i-j}T^{-1}_q)E_{\pi(\pi(i)+i-j), \pi(\pi(i)+i-j)}\right) \\
&=S^{\pi(i)+i-j, \pi(i)}(T_q) S^{i,j}(B_{\pi(i)+i-j}T^{-1}_q)E_{\pi(\pi(i)+i-j)+i-j, \pi(\pi(i)+i-j)}.
\end{align*}

Then, $j=\pi(\pi(i)+i-j)$ and $S^{i,j}(B_{j}T^{-1}_q)$ and $S^{\pi(i)+i-j, \pi(i)}(T_q)$ are units of the Laurent polynomial ring, so we can assume $S^{\pi(j),\pi(i)}(T_q)=d_{\pi(j),\pi(i)}T^{u_{\pi(j),\pi(i)}}_q$ and $S^{i,j}(B_{j}T^{-1}_q)=d_{i,j}T^{u_{i,j}}_q$, with

\begin{equation*}
d_{i,j}.d_{\pi(j), \pi(i)}=1 \quad \text{ and }  \quad u_{\pi(j), \pi(i)}=-u_{i,j},  \quad \text{ thus } \quad  S^{i,j}(T_q)=d_{i,j}B^{u_{i,j}}_jT^{-u_{i,j}}_q.
\end{equation*}

$ $\newline

$\mathit{Case \, d}$. Let $i<j$ and if  $\pi(i) \leq j-i$, since $\sigma$ is an involution, we make use of \eqref{eq:(2.9)} and \eqref{eq:(2.10)} simultaneously to take care of the $z^{-1}$ appearing in $\sigma(E_{i,j})$. In order to do this, we require $\pi(N+\pi(i)+i-j) \geq -i+j+1$. So:

\begin{align*}
E_{i,j}
&=\sigma^2(E_{i,j})=\sigma(z^{-1}\hat{S}^{i,j}(T_q)E_{\pi(i)+i-j+N,\pi(i)}) \\
&=\sigma(E_{\pi(i)+i-j+N,\pi(i)})\sigma(\hat{S}^{i,j}(T_q)E_{\pi(i)+i-j+N,\pi(i)+i-j+N})\sigma(z^{-1}E_{\pi(i)+i-j+N,\pi(i)+i-j+N}) \\
&=\left(z\hat{C}^{\pi(i)+i-j+N, \pi(i)}(T_q)\right)\left(\hat{S}^{i,j}(B_{\pi(i)+i-j+N}T^{-1}_q)\right) \times \\
& \qquad \qquad \qquad \qquad \left(A^{-1}_{\pi(i)+i-j+N}q^{r_{\pi(i)+i-j+N}}z^{-1}T^{-r_{\pi(i)+i-j+N}}_q\right)E_{\pi(\pi(i)+i-j+N)+i-j, \pi(\pi(i)+i-j+N)} \\
&=A^{-1}_{\pi(i)+i-j+N}q^{r_{\pi(i)+i-j+N}}\hat{C}^{\pi(i)+i-j+N,\pi(i)}(q^{-1}T_q)\hat{S}^{i,j}(qB_{\pi(i)+i-j+N}T^{-1}_q) \times \\
&\qquad \qquad \qquad \qquad\qquad \qquad \qquad \qquad \qquad \qquad T^{-r_{\pi(i)+i-j+N}}_qE_{\pi(\pi(i)+i-j+N)+i-j, \pi(\pi(i)+i-j+N)}.
\end{align*}

Then, $j=\pi(\pi(i)+i-j+N)$. Once again, being $\hat{S}^{i,j}(qB_{j}T^{-1}_q)$ and

$A^{-1}_{\pi(j)}q^{r_{\pi(j)}}\hat{C}^{\pi(j), \pi(i)}(q^{-1}T_q)T^{-r_{\pi(j)}}_q$ units of the Laurent polynomial ring, we can write

$\hat{S}^{i,j}(qB_jT^{-1}_q)=d_{i,j}T^{b_{i,j}}_q$ and $A^{-1}_{\pi(j)}q^{r_{\pi(j)}}\hat{C}^{\pi(j), \pi(i)}(q^{-1}T_q)T^{-r_{\pi(j)}}_q=c_{\pi(j), \pi(i)}T^{\epsilon_{\pi(j), \pi(i)}}_q$, with

\begin{equation}\label{eq:otrodelosmolestos}
1= d_{i,j} c_{\pi(j), \pi(i)} \quad \text{ and } \quad -\epsilon_{\pi(j), \pi(i)}=b_{i,j}, \quad \text{ thus } \quad \hat{S}^{i,j}(T_q)=d_{i,j}q^{b_{i,j}}B^{b_{i,j}}_jT^{-b_{i,j}}_q.
\end{equation}


$\mathit{Step \, 5}$ Let $i>j$, then by Step 1, $E_{\pi(i), \pi(i)}=\sigma(E_{i,i})=\sigma(E_{i,j}E_{j,i})=\sigma(E_{j,i})\sigma(E_{i,j})$. Using \eqref{eq:(2.9)} with condition $\pi(i) \leq N-i+j$, we have  that $\pi(j)=\pi(i)+i-j$, therefore  $\pi(j) \geq i-j+1$ trivially. So,

\begin{align*}
E_{\pi(i), \pi(i)}
&=S^{j,i}(T_q)E_{\pi(j)+j-i, \pi(j)} C^{i,j}(T_q)E_{\pi(i)+i-j, \pi(i)} \\
&=d_{j,i}c_{i,j}B^{u_{j,i}}_iB^{s_{i,j}}_jT^{-u_{j,i}-s_{i,j}}_q E_{\pi(i), \pi(i)}
\end{align*}

with
\begin{equation*}
-s_{i,j}=u_{j,i} \quad \text{ and } \quad 1=d_{j,i}c_{i,j}B^{-s_{i,j}}_iB^{s_{i,j}}_j.
\end{equation*}

We can finally rewrite $S^{i,j}(T_q) = c^{-1}_{j,i}B^{-s_{j,i}}_iT^{s_{j,i}}_q$ if $i<j$ and $\pi(j) \geq i-j+1$.

$ $\newline
Now, in the case $i>j$ and $\pi(i) \geq N-i+j+1$ in \eqref{eq:(2.9)}, we have $\pi(j)=\pi(i)+i-j-N$ and it is immediate that $\pi(j) \leq i-j$, then:

\begin{align*}
E_{\pi(i), \pi(i)}
&=\sigma(E_{i,i})=\sigma(E_{j,i})\sigma(E_{i,j}) \\
&=\left(z^{-1}S^{j,i}(T_q)E_{N+\pi(j)+j-i, \pi(j)}\right) \left(zC^{i,j}(T_q)E_{\pi(i)+i-j-N, \pi(i)}\right) \\
&=\left(z^{-1}d_{j,i}q^{b_{j,i}}B^{b_{j,i}}_{i}T^{-b_{j,i}}_q E_{\pi(i), \pi(j)}\right) \left(zc_{i,j}q^{-m_{i,j}}B^{m_{i,j}}_{j}T^{-m_{i,j}}_qE_{\pi(j), \pi(i)}\right) \\
&=c_{j,i}c_{i,j}q^{-m_{i,j}}B^{b_{j,i}}_iB^{m_{i,j}}_jT^{-b_{j,i}-m_{i,j}}_q E_{\pi(i),\pi(i)}
\end{align*}

So,
\begin{equation}\label{eq:unodelosmolestos}
m_{i,j}=-b_{j,i} \quad \text{ and } \quad d_{i,j}=c^{-1}_{j,i}q^{m_{j,i}}B^{-m_{j,i}}_iB^{m_{j,i}}_j
\end{equation}

Because of this, we have $S^{i,j}(T_q) = c^{-1}_{j,i}B^{-m_{j,i}}_iT^{m_{j,i}}_q$ if $i<j$ and $\pi(j) \leq j-i$.

$ $\newline
Thus, we can rewrite \eqref{eq:(2.9)} and \eqref{eq:(2.10)} as the following, for $i>j$

\begin{equation}\label{eq:(2.11)}
\sigma(E_{i,j})= \begin{cases} c_{i,j}B^{s_{i,j}}_jT^{-s_{i,j}}_q E_{\pi(i)+i-j,\pi(i)} & \text{if} \quad   \pi(i) \leq N-i+j \\
								zc_{i,j}q^{-m_{i,j}}B^{m_{i,j}}_jT^{-m_{i,j}}_q E_{\pi(i)+i-j-N,\pi(i)} & \text{if} \quad   \pi(i) \geq N-i+j+1
								\end{cases}								
\end{equation}

and $i<j$

\begin{equation} \label{eq:(2.12)}
\sigma(E_{i,j})= \begin{cases} c^{-1}_{j,i}B^{-s_{j,i}}_iT^{s_{j,i}}_q E_{\pi(i)+i-j,\pi(i)} & \text{if} \quad \pi(i) \geq j-i+1 \\
								z^{-1}c^{-1}_{j,i}B^{-m_{j,i}}_iT^{m_{j,i}}_qE_{\pi(i)+i-j+N,\pi(i)} & \text{if} \quad \pi(i) \leq j-i
							\end{cases}								
\end{equation}


$ $\newline
We now intend to determine the permutation $\pi$. So, let $i_0$ be such that $\pi(i_0)=N$. In case a  and  case c,  $\pi(j)=\pi(i)+i-j$ and it is easy to see that

\begin{equation}\label{eq:pichi}
\pi(i-1)=\pi(i)+1 \quad \mathrm{for \quad any} \quad i \neq i_0.
\end{equation}	

Moreover, since in case b is $\pi(j)=\pi(i)+i-j-N$, we have $\pi(i_0-1)=1$. Since $\pi$ is a bijective map, we conclude that $\pi$ must be $\pi_n$ given in \eqref{eq:matriz} where $n=i_0-1$.

Let us note that if $i \leq n$, $\pi(i)=n-i+1$, and if $i>n$, $\pi(i)=N+n+1-i$. As a consequence, we can easily see that, if $i>j$, $\pi(i) \leq N-i+j$ (case a) corresponds to the choice $i \leq n$ or $j>n$, and the case in which $j<n$ and $i>n$ corresponds to $\pi(i)>N-i+j$ (case b). Similarly for $i<j$, when $i>n$ or $j \leq n$, we have $\pi(i)>j-i$ (case c) and the case $i \leq n$ and $j>n$ corresponds to $\pi(i) \leq j-i$ (case d).

$ $\newline
Computing $z^kT^l_qE_{i,j}=\sigma^2(z^kT^l_qE_{i,j})$ in the four cases for $k \geq 0$ and $l \geq 0$, with their corresponding restrictions, we have:

In case a, where $i>j$ and $i<n$ or $j>n$, we get:

\begin{equation} \label{eq:casoa}
c_{i,j}c_{\pi(j),\pi(i)}B^l_iB^{-l+kr_i}_jA^k_iA^k_{\pi(j)}q^{-kr_i}=1
\end{equation}

and

\begin{equation} \label{eq:casoaa}
-s_{\pi(i),\pi(j)}+s_{i,j}-kr_i+kr_{\pi(j)}=0
\end{equation}

Regarding \eqref{eq:casoaa} when $k = 0$ we deduce, combining \eqref{eq:casoaa} with \eqref{eq:latranca}, that $s_{i,j}=0$.

On the other hand, $k = 1$ in \eqref{eq:casoaa} combined with the fact that $r_{\pi(j)}=r_j$, we get $r_i = r_j$.
So, $r_i = \begin{cases} r & i \leq n \\ \tilde{r} & i>n \end{cases}$.

Now, due to \eqref{eq:latranca} and \eqref{eq:casoa}, with $k = 0$, $l = 1$: $B_i = B_j$.
So, $B_i = \begin{cases} B & i \leq n \\ \tilde{B} & i>n \end{cases}$.

$ $\newline
If we consider $k = 1$, $l = 0$ in \eqref{eq:casoa} and \eqref{eq:latranca}, we have $B^{r_j}_iq^{-r_i}A_iA_{\pi(j)}=1$. Using \eqref{eq:aiapi} and the fact that $A_{\pi(j)}=A_j$, we get $A_i = A_{j}$. So, $A_i = \begin{cases} A & i \leq n \\ \tilde{A} & i>n \end{cases}$. Thus,

\begin{equation} \label{eq:ladea}
A^2(Bq^{-1})^r = 1
\end{equation}

resembling \cite{BL05}, and

\begin{equation} \label{eq:ladeatilde}
\tilde{A}^2(\tilde{B}q^{-1})^{\tilde{r}} = 1.
\end{equation}

$ $\newline
In case b, where $i>j$ and $i>n>j$, we have:

\begin{equation} \label{eq:casob}
c_{i,j}c^{-1}_{\pi(i), \pi(j)}B^{-l+k\tilde{r}-m_{\pi(i),\pi(j)}} \tilde{B}^l \tilde{A}^k A^{k+1} q^ {l-2\tilde{r}k+rk(k-1)/2+(k+1)m_{\pi(i),\pi(j)} }=1
\end{equation}

and
\begin{equation}\label{eq:casobb}
m_{\pi(i), \pi(j)}+m_{i,j}-k\tilde{r}+(k+1)r=0
\end{equation}

Regarding \eqref{eq:casobb} when $k =0$, we deduce that

\begin{equation} \label{eq:rmgral}
m_{\pi(i), \pi(j)}+m_{i,j}+r=0.
\end{equation}

$ $\newline
On the other hand, when $k = 1$ in \eqref{eq:casobb}: $m_{\pi(i), \pi(j)}+m_{i,j}-\tilde{r}+2r=0$. Combining the last two items we get $\tilde{r}=r$.

$ $\newline
Now, due to \eqref{eq:chats} and \eqref{eq:casob} with $k = 0$, $l =0$, $c_{i,j} c^{-1}_{\pi(i), \pi(j)}B^{-m_{\pi(i), \pi(j)}}Aq^{m_{\pi(i), \pi(j)}}=1.$
Combining this with \eqref{eq:casob}, we get that for arbitrary values of $k$ and $l$,

\begin{equation}\label{eq:casobreduc}
B^{-l+k\tilde{r}} \tilde{B}^l \tilde{A}^k A^{k} q^ {l-2\tilde{r}k+rk(k-1)/2+km_{\pi(i),\pi(j)} }=1
\end{equation}

$ $\newline
If we consider $k = 0$, $l =1$ in the last equation, we get $B^{-1}\tilde{B}q=1$. So, $\tilde{B}=q^{-1}B$ and due to \eqref{eq:ladea} and \eqref{eq:ladeatilde}, $\tilde{A}^2=A^2q^r$.

$ $\newline
Finally, when $k = 1$, $l =0$ in \eqref{eq:casobreduc}: $q^{m_{\pi(i), \pi(j)}}$ is constant for every $i>n>j$. So, $m_{r,s}=m$ for every $r>n>s$.

$ $\newline
Now, because of \eqref{eq:rmgral},

\begin{equation}\label{eq:myr}
2m+r=0.
\end{equation}

Letting $k=2$, $l=0$ in \eqref{eq:casobreduc}, we have $B^{2r}\tilde{A}^2A^2q^{-3r+2m}=1$. Since $\tilde{A}^2=A^2q^r$, $B^{2r}A^4q^{-2r+2m}=1$, resulting in $q^{2m}=1$ because of \eqref{eq:ladea}. So, $m=0$ and by \eqref{eq:myr}, $r=0$ and in \eqref{eq:ladea} and \eqref{eq:ladeatilde}, this implies $\tilde{A}^2=A^2=1$.

Again, letting $k=1$ and $l=0$ in \eqref{eq:casobreduc}, $\tilde{A}A=1$ and combining this with the previous equation, we get $A=\tilde{A}=\pm 1$. 

Cases c and d give the same results. 

$ $\newline
We have thus arrived at the final relations of \eqref{eq:generators}.

$ $\newline
Now, recall that we have for $1 \leq i \leq N$,
$E_{\pi(i), \pi(i)}=\sigma(E_{i,i})=\sigma(E_{i,i-1}E_{i-1,i})=\sigma(E_{i-1,i})\sigma(E_{i,i-1})$.
So, rewriting \eqref{eq:(2.11)} and \eqref{eq:(2.12)} for these cases, we have:

\begin{equation}
\sigma(E_{i,i-1})= \begin{cases} c_{i,i-1} E_{\pi(i)+1,\pi(i)} & \text{if} \quad   \pi(i) < N \\
								zc_{i,i-1}E_{1,N} & \text{if} \quad   \pi(i)=N
								\end{cases}								
\end{equation}

and

\begin{equation}
\sigma(E_{i-1,i})= \begin{cases} c^{-1}_{i,i-1}E_{\pi(i-1)-1,\pi(i-1)} & \text{if} \quad \pi(i-1) > 1 \\
								z^{-1} c^{-1}_{i,i-1}  E_{N,1} & \text{if} \quad \pi(i-1)=1
							\end{cases}								
\end{equation}

$ $\newline
If $i>j$, since $\sigma(E_{i,j}) = \sigma(E_{i,i-1}E_{i-1,i-2} \cdots E_{j+1,j})$, we get \eqref{eq:productosc}. Finally, \eqref{eq:lamolesta} are results of \eqref{eq:latranca} and of \eqref{eq:casob} with $k=0, \, l=0$ and taking into consideration that $m=0$ and $r=0$.

$ $\newline
On the other hand, it is straightforward to check that $\sigma$ defined by \eqref{eq:generators} is indeed anti-involution of $\sqa$, finishing the proof.

\end{proof}


\begin{corollary}
If $N=n$, the anti-involution $\sigma=\sigma_{A,B,c,r,N}$ is given by

\begin{equation*}
\sigma(E_{ii})=E_{\pi_n(i),\pi_n(i)}
\end{equation*}

\begin{equation*}
\sigma(T_qE_{ii})=BT^{-1}_qE_{\pi_n(i),\pi_n(i)}
\end{equation*}

\begin{equation}
\sigma(zE_{ii})=zAT^r_qE_{\pi_n(i),\pi_n(i)}
\end{equation}

\begin{equation*}
\sigma(z^{-1}E_{ii})=A^{-1}q^{r}z^{-1}T^{-r}_qE_{\pi_n(i),\pi_n(i)}
\end{equation*}

\begin{equation*}
\sigma(E_{ij})= \begin{cases} c_{i,j}E_{\pi_n(j),\pi_n(i)} & \text{if} \quad i>j \\
								c^{-1}_{j,i}E_{\pi_n(j),\pi_n(i)} & \text{if} \quad i<j
								\end{cases}
\end{equation*}
where $A$, $B$,  $c_{i,j}, \, r \, \in \, \mathbb{C}$, $A^2(Bq^{-1})^r=1$ and $c_{i,j}$ verify relations \eqref{eq:productosc} and \eqref{eq:lamolesta}.

\end{corollary}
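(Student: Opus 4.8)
The plan is to recycle the proof of Theorem~\ref{Teoremita} almost verbatim, the only---but decisive---difference being that the hypothesis $n<N$ is replaced by $n=N$. When $n=N$ the permutation $\pi_N$ of \eqref{eq:matriz} degenerates into the full reversal $\pi_N(i)=N-i+1$, and the dichotomy ``$i\le n$ versus $i>n$'' that organises the whole argument collapses, since \emph{every} index now satisfies $i\le n$. I would begin by running Steps~1--3 of the proof of Theorem~\ref{Teoremita} unchanged, since none of them used $n<N$. These steps already pin down $\sigma(E_{ii})=E_{\pi_N(i),\pi_N(i)}$, force $k_i=-1$, and give $\sigma(T_qE_{ii})=B_iT_q^{-1}E_{\pi_N(i),\pi_N(i)}$ and $\sigma(zE_{ii})=A_iz T_q^{r_i}E_{\pi_N(i),\pi_N(i)}$ together with the relation \eqref{eq:aiapi} and the identities $B_i=B_{\pi_N(i)}$, $r_{\pi_N(i)}=r_i$.

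The real simplification occurs in Step~4. Cases~b and~d there arose \emph{only} from the configurations $i>n>j$ and $i\le n<j$ respectively, and these are exactly what produced the factors $z^{\pm1}$ in $\sigma(E_{ij})$. With $n=N$ no index exceeds $n$, so cases~b and~d are empty and every off-diagonal $\sigma(E_{ij})$ lands in case~a or case~c with no $z^{\pm1}$ factor. Re-running case~a via \eqref{eq:casoa}--\eqref{eq:casoaa}, which now applies to \emph{every} pair $i>j$, I would take $k=0$ to obtain $s_{i,j}=0$, and then read off directly that $B_i=B$, $r_i=r$ and $A_i=A$ are index-independent constants, arriving at the single relation \eqref{eq:ladea}, namely $A^2(Bq^{-1})^r=1$.

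The point I would stress---and the only place where care is genuinely needed---is that the extra constraints $r=0$, $m=0$ and $A=\pm1$ appearing in Theorem~\ref{Teoremita} are manufactured \emph{exclusively} in case~b, through the chain \eqref{eq:casob}--\eqref{eq:casobreduc} culminating in $q^{2m}=1$ and \eqref{eq:myr}. Since case~b is now vacuous this chain never starts, so $A$ and $r$ survive as free parameters subject only to \eqref{eq:ladea}. This is precisely why the $N=n$ family $\sigma_{A,B,c,r,N}$ is strictly larger than its $n<N$ counterparts, and verifying that no hidden relation secretly re-imposes $r=0$ is the main obstacle: one must check that every use of case~b in the original argument is genuinely unavailable here.

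Finally, the constraints on the $c_{i,j}$ follow as before. Factoring $\sigma(E_{ij})=\sigma(E_{i,i-1}E_{i-1,i-2}\cdots E_{j+1,j})$ for $i>j$ yields \eqref{eq:productosc}, while \eqref{eq:latranca} gives the surviving branch $c_{i,j}c_{\pi_N(j),\pi_N(i)}=1$ of \eqref{eq:lamolesta}---the second branch being vacuous since it requires $i>n\ge j$. The converse (that $\sigma_{A,B,c,r,N}$ so defined really is a grading-preserving anti-involution) is a direct check on generators, identical in spirit to the closing paragraph of the proof of Theorem~\ref{Teoremita}.
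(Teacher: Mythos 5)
Your proposal is correct and follows essentially the same route as the paper, whose entire proof is the observation that for $n=N$ only case~a (and its mirror, case~c) of the proof of Theorem~\ref{Teoremita} can occur, so the constraints $r=0$, $m=0$, $A=\pm1$ coming from case~b never arise. Your more detailed tracing of where those constraints originate, and your check that the second branch of \eqref{eq:lamolesta} becomes vacuous, are faithful elaborations of that one-line argument.
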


\begin{proof}
If $n=N$ there is only case a to be considered in the proof of Theorem \ref{Teoremita}.
\end{proof}

\begin{remark}
Case $N=1$ coincides with \cite{BL05}.
\end{remark}


We will now concentrate on the implications of conditions \eqref{eq:productosc} and \eqref{eq:lamolesta}. First, let us note that as a consequence of \eqref{eq:productosc}, all coefficients $c_{i,j}$ are completely determined by

\begin{equation}
c_i := c_{i+1,i}, \quad i=1, \cdots, N-1
\end{equation}
and the upper condition of \eqref{eq:lamolesta} can be written as $c_i.c_{\pi_n(i+1)}=1 \, (i \neq n-1)$ by \eqref{eq:pichi}. Combining the lower condition of \eqref{eq:lamolesta} with \eqref{eq:productosc} we get $\pm 1=c_n.(c_{N,1})^{-1}=c_n.\prod_i (c_i)^{-1}= \prod_{i \neq n} (c_i)^{-1}$. Also, let us note that the permutation $\pi_n$ is given by two simple permutations of the sets $\{ 1, \cdots , n \}$ and $\{ n+ 1, \cdots , N \}$. Thus, Eq. \eqref{eq:lamolesta} reduces to

\begin{equation}\label{eq:condition}
c_i c_{n-i}=1 \, (1 \leq i \leq n), \quad c_{n+i}c_{N-i}=1 \, (1 \leq i < N-n)
\end{equation}
and
\begin{equation}\label{eq:lastcondition}
\pm 1=\prod_{i \neq n} c_i.
\end{equation}

Let $N=t+n$ and let us analyze the previous formulas. If $n$ (respectively, $t$) is even, by \eqref{eq:condition} we have $\prod_{i<n} c_i = c_{n/2}$ and $(c_{n/2})^2=1$ (respectively, $\prod_{i>n} c_i = c_{n+(t/2)}$ and $(c_{n+(t/2)})^2=1$). The coefficient $c_{n/2}$ (respectively, $c_{n+(t/2)}$) will be called a fixed point. 

\textit{Case -:}

If $N$ is even and
\begin{enumerate}
\item $n$ is even, the condition \eqref{eq:lastcondition} is satisfied if there are two fixed points: One of them must be $1$ and the other one equal to $-1$.
\item $n$ is odd, then there are no fixed points and \eqref{eq:lastcondition} is impossible. Thus \textit{there is no anti-involution in this case.}
\end{enumerate}  

If $N$ is odd, then $n$ or $t$ is even and we have only one fixed point that must be equal to $-1$.

\textit{Case+:}
For any $N$, condition \eqref{eq:lastcondition} is satisfied if the (possible) fixed points are all equal to $1$.

$ $\newline


From now on, we will consider separately the cases $N=n$ and $n < N$ in an attempt to exhibit more clearly their particular results.

\section{Case $n=N$.}

\subsection{Lie subalgebras of $\sqn$}

Let $\sqn^{A,B,c,r,N}$ denote the Lie subalgebra of $\sqn$ fixed by minus $\sigma_{A,B,c,r,N}$, namely

\begin{equation}
\sqn^{A,B,c,r,N}=\{ a \, \epsilon \, \sqn | \sigma_{A,B,c,r,N}(a)=-a \},
\end{equation}

where $\sigma_{A,B,c,r,N}$, for $h \, \in \, \mathbb{C}[w, w^{-1}]$, is given by

\begin{equation}\label{eq:sigmapuntito}
\sigma_{A,B,c,r,N}(z^kh(T_q)E_{i,j})=A^{k}q^{k(k-1)r/2}z^k h(Bq^{-k}T^{-1}_q)T^{kr}_qE_{\pi(j),\pi(i)}.
\end{equation}

Note that $\dot{\sigma}_{A,B,r}$ from \cite{BL05} agrees with $\sigma_{A,B,c,r,N}$ for $N=1$.

$ $\newline
Let us now analyze the relation among $\sqn^{A,B,c,r,N}$ for different values of $A,B,c,r$ and $N$. To that end, let $s \, \in \, \mathbb{C}$, denote by $\theta_s$ the automorphism of $\sqan$ given by $\theta_s(M)=M$, $\theta_s(zI)=zI$ and $\theta_s(T_qI)=q^sT_qI$, where $M \, \in \, Mat_N \mathbb{C} $ and $I$ stands for the identity matrix. It is easy to check that $\theta_s$ preserves the principal $\mathbb{Z}-$gradation of $\sqan$. Making use of the equation for $\sigma_{A,B,c,r,1}$ pointed out in \eqref{eq:sigmapuntito}, we have

\begin{equation}\label{eq:sigmatheta1}
\theta_{s}\sigma_{A,B,c,r,N}\theta_{-s}=\sigma_{q^{sr}A,q^{-2s}B,c,r,N}
\end{equation}

which resembles \cite{BL05}, when $N=1$.

Similarly, let $\alpha=\{\alpha_{i,j} \} \, (i>j)$ satisfying \eqref{eq:productosc} and \eqref{eq:lamolesta}. Denote by $\Gamma_{\alpha}$ the automorphism of $\sqan$ defined by $\Gamma_{\alpha}(zI)=zI$, $\Gamma_{\alpha}(T_qI)=T_qI$,

\begin{equation}
\Gamma_{\alpha}(E_{i,j})= \begin{cases} \alpha_{i,j}E_{i,j} & \text{if} \quad i>j \\
								 \alpha^{-1}_{j,i}E_{i,j} & \text{if} \quad i<j
								 \end{cases}
\end{equation}
Let $\sigma_c:=\sigma_{A,B,c,r,N}$, then we have

\begin{equation}\label{eq:sigmagamma1}
\sigma_c.\Gamma_{\alpha}=\sigma_{c.\alpha}=\Gamma_{\alpha^{-1}}.\sigma_c,
\end{equation}

$ $\newline
where $(c.\alpha)_{i,j}:=c_{i,j}\alpha_{i,j}$ and $(\alpha^{-1})_{i,j}=\alpha^{-1}_{i,j}$. Observe that $c.\alpha$ and $\alpha^{-1}$ also satisfy \eqref{eq:productosc} and \eqref{eq:lamolesta}.
Using \eqref{eq:sigmatheta1} and \eqref{eq:sigmagamma1}, we have:

\begin{lemma}
The Lie algebras $\sqn^{A,B,c,r,N}$ for arbitrary choices of $  A, \, B \text{ and } c$
are isomorphic to $\sqn^{\epsilon, q, \mathbf{1}, r, N}$, where $\epsilon$ is $1$ or $-1$, and $\mathbf{1}$ is the matrix $c$ with $c_i=1$ except for the fixed points that are $1$ or $-1$, which keep their sign.
\end{lemma}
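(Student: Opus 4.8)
The plan is to realize the isomorphism as the restriction of a gradation-preserving automorphism of $\sqan$ that conjugates the anti-involution $\sigma_{A,B,c,r,N}$ into the normal form $\sigma_{\epsilon,q,\mathbf 1,r,N}$. The underlying observation is purely formal: if $\phi\in\mathrm{Aut}(\sqan)$ and $\sigma'=\phi\sigma\phi^{-1}$, then $\sigma'$ is again an anti-involution preserving the principal gradation, and for any $a$ with $\sigma(a)=-a$ we have $\sigma'(\phi(a))=\phi\sigma(a)=-\phi(a)$. Since $\phi$ is bijective and respects the commutator, it restricts to a Lie algebra isomorphism from $\sqn^{A,B,c,r,N}$ onto the algebra of minus-fixed points of $\sigma'$. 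Hence it suffices to bring the parameters $(A,B,c)$ into normal form by two successive conjugations.

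First I would normalize $B$ and $A$ using the automorphisms $\theta_s$. By \eqref{eq:sigmatheta1}, $\theta_s\,\sigma_{A,B,c,r,N}\,\theta_{-s}=\sigma_{q^{sr}A,\,q^{-2s}B,\,c,\,r,N}$, so $c$ and $r$ are left untouched. Because $q$ is not a root of unity, the map $s\mapsto q^{-2s}$ is onto $\mathbb C^\times$, and one may choose $s\in\mathbb C$ with $q^{-2s}B=q$, sending $B\mapsto q$. Conjugation carries anti-involutions to anti-involutions, so the new parameters again satisfy the constraint \eqref{eq:ladea}; with $B=q$ this reads $A^2=1$, forcing the new value of $A$ to be some $\epsilon\in\{\pm1\}$. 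Thus $\sqn^{A,B,c,r,N}\cong\sqn^{\epsilon,q,c,r,N}$.

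Next I would normalize $c$ using the automorphisms $\Gamma_\beta$. Writing $\sigma_c:=\sigma_{\epsilon,q,c,r,N}$ and combining the two halves of \eqref{eq:sigmagamma1} (via $\Gamma_\beta^{-1}=\Gamma_{\beta^{-1}}$ and $\sigma_c\Gamma_{\beta^{-1}}=\sigma_{c.\beta^{-1}}$) gives the conjugation rule $\Gamma_\beta\,\sigma_c\,\Gamma_\beta^{-1}=\sigma_{c.\beta^{-2}}$. It then remains to choose $\beta$ satisfying \eqref{eq:productosc} and \eqref{eq:lamolesta} with $c.\beta^{-2}=\mathbf 1$. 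As with $c$, such a $\beta$ is determined by the entries $\beta_i:=\beta_{i+1,i}$ subject to \eqref{eq:condition}; I would take $\beta_i$ to be a square root of $c_i$ for $i<n/2$, set $\beta_{n-i}:=\beta_i^{-1}$, and put $\beta_{i_0}:=1$ at each fixed point $i_0$. Then $\beta_i\beta_{n-i}=1$ holds automatically, so \eqref{eq:condition} is met, and one computes $c_i\beta_i^{-2}=1$, $c_{n-i}\beta_{n-i}^{-2}=c_{n-i}c_i=1$, and $c_{i_0}\beta_{i_0}^{-2}=c_{i_0}$; that is, $c.\beta^{-2}=\mathbf 1$. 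Composing, $\phi:=\Gamma_\beta\theta_s$ conjugates $\sigma_{A,B,c,r,N}$ to $\sigma_{\epsilon,q,\mathbf 1,r,N}$, and the general principle of the first paragraph yields the claim.

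The delicate point is the construction of $\beta$: the two requirements \eqref{eq:condition} and $c.\beta^{-2}=\mathbf 1$ must be met simultaneously, so the square roots cannot be chosen independently. At a fixed point $i_0$ the self-pairing in \eqref{eq:condition} forces $\beta_{i_0}^2=1$, whence $\beta_{i_0}^{-2}=1$, so the sign $c_{i_0}=\pm1$ cannot be absorbed and survives in $\mathbf 1$. This is precisely why the fixed-point signs are genuine invariants of the conjugacy class and must be recorded in the statement.
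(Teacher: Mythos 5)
Your proposal is correct and follows exactly the route the paper intends: the paper's "proof" consists only of invoking \eqref{eq:sigmatheta1} and \eqref{eq:sigmagamma1}, and you have supplied the details it omits — the general fact that conjugating the anti-involution by a gradation-preserving automorphism carries minus-fixed-point subalgebras isomorphically onto one another, the choice of $s$ with $q^{-2s}B=q$ (forcing $A^2=1$ via \eqref{eq:ladea}), and the derivation of $\Gamma_\beta\sigma_c\Gamma_\beta^{-1}=\sigma_{c.\beta^{-2}}$ together with an explicit $\beta$ with $c.\beta^{-2}=\mathbf 1$. Your closing observation that the self-paired condition $\beta_{i_0}^2=1$ at a fixed point is what prevents absorbing the sign $c_{i_0}=\pm1$ is exactly the reason the fixed-point signs persist in $\mathbf 1$.
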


We shall introduce some notation in  order to give an explicit description of this family of subalgebras.

First, we will write $\sigma_{\epsilon,r,N}$ and $\sqn^{\epsilon,r,N}$ instead of $\sigma_{\epsilon,q,\mathbf{1},r,N}$ and $\sqn^{A,q,\mathbf{1},r,N}$. Also, for any matrix $M \, \in \, Mat_{m \times n}(\mathbb{C})$, define

\begin{equation}
(M)_{i,j}^{\dagger}=M_{n+1-j,m+1-i},
\end{equation}

i.e., the transpose with respect to the ``other" diagonal. Recall the anti-involutions on $\sq:=\mathscr{S}_{q,1}$  given in \cite{BL05}:

\begin{equation}\label{eq:nomatricialN}
\dot{\sigma}_{\pm,B, r}(z^kf(T_q))=(\pm z)^kq^{k(k-1)r/2}f(Bq^{-k}T^{-1}_q)T_q^{kr}.
\end{equation}

An extension of $\dot{\sigma}_{\pm,B,r}$ to a map on $Mat_{N \times N}(\sq)=\sq \otimes Mat_{N \times N}(\mathbb{C})$ can be made by taking $[\dot{\sigma}_{\pm,B,r}(M)]_{i,j}=\dot{\sigma}_{\pm,B,r}(M_{i,j})$.

Case +:
We define the following map on $Mat_{N \times N}(\sq)$:

\begin{eqnarray}\label{eq:desdeaca}
M^{\dagger_1}=\dot{\sigma}_{+,q,r}(M^{\dagger})
\end{eqnarray}

Explicitly, the anti-involution $\sigma_{+,r,N}$ on $\sqn=\sq \otimes Mat_N (\mathbb{C})$ is given by

\begin{equation}
\sigma_{+,r,N} (M)= \left( M^{\dagger_1} \right),
\end{equation}

where $M \, \in \, Mat_{N \times N}(\sq)$, and

\begin{equation}\label{eq:hastaaca}
\sqn^{+,r,N}= \{ M  : M+M^{\dagger_1}=0 \}.
\end{equation}

Case -:
Now, consider the following map on $Mat_{N \times N}(\sq)$:

\begin{eqnarray}\label{eq:estetambien}
M^{\ast_1} :=\dot{\sigma}_{-,q,r}(M^{\dagger}).
\end{eqnarray}

Then $\sigma_{-,r,N}$ on $\sqn$ is explicitly given by

\begin{equation}
\sigma_{-,r,N} ( M)= \left( M^{\ast_1} \right),
\end{equation}

where $M \, \in \, Mat_{N \times N}(\sq)$. And

\begin{equation}\label{eq:hastaacaa}
\sqn^{-,r,N}=\{ M : M+M^{\ast_1}=0  \}.
\end{equation}

Let us note that $\sqn^{\pm,r,N}$ are Lie subalgebras of $\sqn$ and that $\dagger_1$ and $\ast_1$ are anti-automorphisms.

\begin{remark}\label{Remark} Replacing $\dagger$ by $T$ (usual transpose) in \eqref{eq:desdeaca} and \eqref{eq:estetambien} gives us another family of involutions that we shall denote by $\sigma^T_{\pm,r,N}$, which do not preserve the principal $\mathbb{Z}$-gradation. Moreover, the corresponding subalgebras are not $\mathbb{Z}$-graded subalgebras of $\sqn$, even though they are isomorphic to the others using that $M^{\dagger}=J M^TJ^{-1}$, where $J$ is the following $N \times N$ matrix

\begin{equation}\label{eq:ladejotaN}
J= \Bigg( \begin{matrix}
 0 & \cdots & 1 \\
\vdots & 1 & \vdots \\
1 & \cdots & 0
\end{matrix} \Bigg) .
\end{equation}

This way, we get $Ad_{J} \circ \sigma^T_{\pm,r,N} = \sigma_{\pm,r,N}$.
\end{remark}


\subsection{Generators of $\sqn^{\epsilon, r, N}$}

We can now give a detailed description of the generators of $\sqn^{\epsilon, r, N}$. 

Let us denote $ \mathbb{C} [ w,w^{-1} ] ^{(\epsilon),j}$ (where $\epsilon = 1$ or $\epsilon = -1$) the set of Laurent polynomials such that $f(w^{-1})=-(\epsilon)^jf(w)$.

Note that $\sqn^{\epsilon, r, N}= \{ x- \sigma_{\epsilon,r,N}(x):x \, \in \, \sqn \}$ and observe that by \eqref{eq:nomatricialN}

\begin{equation*}
\dot{\sigma}_{\pm,q,r}(z^k(q^{(k-1)/2}T_q)^{kr/2}f(q^{(k-1)/2}T_q))=(\pm z)^k (q^{(k-1)/2}T_q)^{kr/2} f((q^{(k-1)/2}T_q)^{-1}).
\end{equation*}

Here and in the following we will use the description of the elements in the subalgebras used in \eqref{eq:hastaaca} and \eqref{eq:hastaacaa}. The following is a set of generators of $\sqn^{\pm, r, N}$.

\begin{align*}
 \{ z^k (q^{(k-1)/2}T_q)^{kr/2}(f(q^{(k-1)/2}T_q)E_{i,n+1-j} -(\epsilon)^k & f((q^{(k-1)/2}T_q)^{-1})E_{j,n+1-i}): k \, \in \, \mathbb{Z}, \\
					&   \, f \, \in \mathbb{C}[w,w^{-1}], \, 1 \leq i < j \leq n \},
\end{align*}

and the generators on the opposite diagonal are

\begin{equation*}
\{ z^k (q^{(k-1)/2}T_q)^{kr/2} f(q^{(k-1)/2}T_q)E_{i,n+1-i}: k \, \in \, \mathbb{Z}, \, f \, \in \mathbb{C}[w,w^{-1}]^{(\epsilon),k}, \, 1 \leq i \leq n \}.
\end{equation*}

\subsection{Geometric realization of $\sigma_{\pm,r,N}$}

In this subsection we give a geometric realization of $\sigma_{\pm,r,N}$.
The algebra $\sqn$ acts on the space $V=\mathbb{C}^N[z, z^{-1}]$ and we define two bilinear forms on $V$:

\begin{equation}
B_{\pm}(h,g)=Res_z(\Phi_{\pm}(h^{T})Jg),
\end{equation}

where $J=z^{-2}J_N$, $J_N$ as in \eqref{eq:ladejotaN} and $\Phi_{\pm}: V \rightarrow V $ given by $\Phi_{\pm}(h(z))=h(\pm z), \, h(z) \, \in \, V$.


\begin{proposition}
\begin{itemize}
\item[(a)] The bilinear forms $B_{\pm}$ are nondegenerate. Moreover, $B_{+}$ is symmetric and $B_{-}$ is anti-simmetric.
\item[(b)] For any $L \, \in \, \sqn$ and $h,g \, \in \, V$ we have
\begin{equation}
B_{\pm}(Lh,g) = B_{\pm}(h, (T^{-kr/2}_q\sigma_{\pm,r,N}(L)T^{kr/2}_q)(g)),
\end{equation}
where $L=z^kT^{kr/2}_q p(T_q)(M)$. In other words, $L$ and $T^{-kr/2}_q\sigma_{\pm,r,N}(L)T^{kr/2}_q$ are adjoint operators with respect to $B_{\pm}$.
\end{itemize}
\end{proposition}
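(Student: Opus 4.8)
The plan is to reduce everything to the monomial basis $\{e_i z^m : 1\le i\le N,\ m\in\mathbb{Z}\}$ of $V$, where $e_i$ is the $i$-th standard column vector, and to exploit that $Res_z$ extracts the coefficient of $z^{-1}$. Writing $\varepsilon=\pm1$ for the two cases, a one-line computation from $B_{\pm}(h,g)=Res_z(h(\pm z)^{T}z^{-2}J_N g)$ gives
\[
B_{\pm}(e_i z^{a},e_j z^{b})=\varepsilon^{a}(J_N)_{ij}\,\delta_{a+b,1}=\varepsilon^{a}\,\delta_{i,N+1-j}\,\delta_{a+b,1}.
\]
Everything in part (a) is read off from this single identity.

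For symmetry, interchanging the two arguments replaces $\varepsilon^{a}$ by $\varepsilon^{b}$ and $(J_N)_{ij}$ by $(J_N)_{ji}=(J_N)_{ij}$ (since $J_N^{T}=J_N$); on the support $a+b=1$ one has $\varepsilon^{b}=\varepsilon^{1-a}=\varepsilon\,\varepsilon^{a}$, whence $B_{\pm}(g,h)=\varepsilon\,B_{\pm}(h,g)$, i.e.\ $B_{+}$ is symmetric and $B_{-}$ is antisymmetric. For nondegeneracy, the same identity shows that each $e_i z^{a}$ pairs nontrivially with exactly one basis vector, namely $e_{N+1-i}z^{1-a}$, and with the nonzero value $\varepsilon^{a}$; hence $B_{\pm}$ has trivial radical.

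For part (b) I would first record the adjoints of the generators with respect to $B_{\pm}$. Short residue manipulations (the substitutions $z\mapsto\pm z$ and $z\mapsto q^{-1}z$) yield $B_{\pm}(z\,h,g)=\pm B_{\pm}(h,z\,g)$ and $B_{\pm}(T_q h,g)=B_{\pm}(h,qT_q^{-1}g)$, and, using $J_N M^{T}=M^{\dagger}J_N$ (equivalently $M^{\dagger}=J_N M^{T}J_N^{-1}$ from Remark \ref{Remark}), $B_{\pm}(M h,g)=B_{\pm}(h,M^{\dagger}g)$. Because the $B_{\pm}$-adjoint is an anti-automorphism, composing these three relations and normalizing with the commutation rule $f(T_q)z^{k}=z^{k}f(q^{k}T_q)$ computes the adjoint of an arbitrary homogeneous $L=z^{k}T_q^{kr/2}p(T_q)M$. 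The matrix part already matches $\sigma_{\pm,r,N}$, since $M\mapsto M^{\dagger}$ is exactly $E_{i,j}\mapsto E_{\pi(j),\pi(i)}$ for the full reversal $\pi=\pi_N$.

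The main obstacle is the last step: identifying this assembled adjoint with $T_q^{-kr/2}\sigma_{\pm,r,N}(L)T_q^{kr/2}$, which is pure bookkeeping of scalar $q$-powers and of the exponent of $T_q$. The definition \eqref{eq:sigmapuntito} carries the factors $q^{k(k-1)r/2}$ and $T_q^{kr}$, whereas the raw adjoint carries only the $q$-powers produced by commuting $z^{k}$ past functions of $T_q$; the conjugation $T_q^{-kr/2}(\,\cdot\,)T_q^{kr/2}$ is the device meant to reconcile these two normalizations. Thus the crux is to check that, after this conjugation, both the exponent of $T_q$ and the total power of $q$ agree identically as functions of $k$ and $r$. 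By linearity and the anti-automorphism property it is enough to verify this on the algebra generators $z^{\pm1}E_{ii}$, $T_qE_{ii}$ and $E_{ij}$.
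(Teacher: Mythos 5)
Part (a) of your proposal is correct and amounts to the computation the paper omits as ``straightforward''. The serious problem is in part (b), where everything you label ``pure bookkeeping'' and ``the crux'' is left undone, and the method you propose for doing it is unsound. The assignment $L\mapsto T_q^{-kr/2}\sigma_{\pm,r,N}(L)T_q^{kr/2}$ (with $k$ the $z$-degree of $L$) is \emph{not} an anti-homomorphism: conjugation by $T_q^{a}$ multiplies any element of $z$-degree $k$ by the scalar $q^{ka}$, so this map is simply $L\mapsto q^{-k^{2}r/2}\sigma_{\pm,r,N}(L)$, and on a product of elements of degrees $k_{1},k_{2}$ it disagrees with the product of the images (in reverse order) by the factor $q^{-k_{1}k_{2}r}$. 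Since the genuine $B_{\pm}$-adjoint \emph{is} an anti-homomorphism, ``verify on generators and extend by the anti-automorphism property'' is not an available step. Moreover, the generator check you defer actually fails for $r\neq 0$: your three adjoint relations give that the $B_{\pm}$-adjoint of $zE_{ii}$ is $\pm zE_{N+1-i,N+1-i}$, while $T_q^{-r/2}\sigma_{\pm,r,N}(zE_{ii})T_q^{r/2}=\pm q^{-r/2}\,zT_q^{r}E_{N+1-i,N+1-i}$; no scalar correction can absorb the operator $T_q^{r}$.

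What your relations $B_{\pm}(zh,g)=\pm B_{\pm}(h,zg)$, $B_{\pm}(T_qh,g)=B_{\pm}(h,qT_q^{-1}g)$ and $B_{\pm}(Mh,g)=B_{\pm}(h,M^{\dagger}g)$ do yield, once composed, is that the adjoint of $L$ is $q^{-k(k-1)r/2}\sigma_{\pm,r,N}(L)T_q^{-kr}$, equivalently the twisted identity \eqref{eq:fbconq}, namely $B_{\pm}(Lh,T_q^{kr/2}g)=B_{\pm}(T_q^{kr/2}h,\sigma_{\pm,r,N}(L)g)$. This is exactly where the paper's own proof --- a direct evaluation of both sides on monomials $z^{u}e_{p}$, $z^{s}e_{q}$ --- arrives before its final line; the remaining passage to the displayed formula of part (b) is the one step that genuinely requires justification, because the correction factors $q^{rku/2}$ on the left and $q^{skr/2}$ on the right depend on the degrees of $h$ and $g$ and not only on $L$, and on the support $k+u+s=1$ the untwisted identity closes only when $kr=0$. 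So you must either prove the statement in the twisted form \eqref{eq:fbconq} or supply an explicit argument for the untwisting; it cannot be dismissed as bookkeeping, and attempting it on the single generator $zE_{ii}$ is the fastest way to see where the difficulty lies.
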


\begin{proof}
\begin{itemize}

\item[(a)] The statements are straightforward.

\item[(b)] Let $L=z^kT^{kr/2}_q p(T_q)(M)$, $h=z^ue_p$ and $g=z^se_q$. Recall that

\begin{equation*}
L(h)=z^{k+u}q^{rku/2}p(q^u)(Me_p)
\end{equation*}

and

\begin{equation*}
\sigma_{\pm,r,N}(L)(g)=(\pm1)^k z^{s+k} q^{skr/2} p(q^{-k-s+1})M^{\dagger} e_q.
\end{equation*}

So,

\begin{align}\label{eq:LHS}
B_{\pm}(L(z^ue_p),z^se_q)
&=Res_z (\pm z)^{k+u} q^{rku/2} p(q^u) e^T_p M^T z^{-2}J_N z^s e_q \\ \nonumber
&=(\pm 1)^{k+u} q^{rku/2} p(q^u) \delta_{k+u+s,1}(M^{T}J_N)_{(p,q)}. \nonumber
\end{align}

On the other hand, we have

\begin{align}\label{eq:RHS}
B_{\pm}(h, \sigma_{\pm,r,N}(L)g)
&=Res_z (\pm1)^{k+u} z^{u+k+s-2} q^{skr/2}p(q^{-k-s+1}) e^T_p J_N M^{\dagger} e_q \\ \nonumber
&=((\pm1)^{k+u} q^{skr/2} \delta_{k+u+s,1}p(q^{-k-s+1})J_NM^{\dagger} )_{(p,q)} \\ \nonumber
&=((\pm1)^{k+u} q^{skr/2} p(q^{-k-s+1}) \delta_{k+u+s,1}J_NM^{\dagger} )_{(p,q)} \nonumber
\end{align}
\end{itemize}

Note that if we multiply \eqref{eq:LHS} by $q^{skr/2}$ and \eqref{eq:RHS} by $q^{rku/2}$, we get

\begin{equation}\label{eq:fbconq}
B_{\pm}(L(z^ue_p),T^{kr/2}_q I z^se_q)=B_{\pm}(T^{kr/2}_q I z^ue_p, \sigma_{\pm,r,N}(L) z^se_q).
\end{equation}

It is easy to prove that, for $\alpha \, \in \, \mathbb{C}$,

\begin{equation*}
B_{\pm}(T^{\alpha}_qIz^ue_p, z^se_q)=B_{\pm}(z^ue_p, \sigma_{\pm,r,N}(T^{\alpha}_qI) z^se_q).
\end{equation*}

Making use of this result in \eqref{eq:fbconq}, we can see that

\begin{equation*}
B_{\pm}(\sigma_{\pm,r,N}(T^{kr/2}_qI)L(z^ue_p), z^se_q)=B_{\pm}(z^ue_p, \sigma_{\pm,r,N}(T^{kr/2}_qI)\sigma_{\pm,r,N}(L) z^se_q).
\end{equation*}

Thus, as expected, we get

\begin{equation*}
B_{\pm}(L(z^ue_p), z^se_q)=B_{\pm}(z^ue_p, (T^{-kr/2}_qI)\sigma_{\pm,r,N}(L)(T^{kr/2}_qI) z^se_q).
\end{equation*}

\end{proof}

\begin{remark}
In a similar fashion, we can define the following nondegenerate bilinear forms on $V$:
\begin{equation*}
B^T_{\pm}(h,g)=Res_z (\Phi_{\pm}(h^T) J_{T} g),
\end{equation*}

where

\begin{equation*}
J_{T}= z^{-2}I_n,
\end{equation*}

with $I_n$ the $n \times n$ identity matrix, and it easily follows that they satisfy

\begin{equation*}
B_{\pm}(Lh,g)=B_{\pm}(h,T^{-kr/2}_q\sigma^T_{\pm,r,N}(L)T^{kr/2}_qg),
\end{equation*}

where $\sigma^T_{\pm,n}$ were defined in Remark \ref{Remark}.

\end{remark}


$ $\newline
\section{Case $n < N$.}
\subsection{Lie subalgebras of $\sqn$}

Let $\sqn^{\pm,B,c,n}$ denote the Lie subalgebra of $\sqn$ fixed by minus $\sigma_{\pm,B,c,n}$:

\begin{equation}
\sqn^{\pm,B,c,n}=\{ a \, \epsilon \, \sqn | \sigma_{\pm,B,c,n}(a)=-a \}.
\end{equation}

$ $\newline
As in the case $n=N$, we analyze the relation among $\sqn^{\pm,B,c,n}$ for different values of $B,c$ and $n$. Let $s \, \epsilon \, \mathbb{C}$, denote by $\theta_s$ the automorphism of $\sqan$ given by $\theta_s(M)=M$, $\theta_s(zI)=zI$ and $\theta_s(T_qI)=q^sT_qI$, where $I$ stands for the identity matrix and $M \, \epsilon \, Mat_N \mathbb{C} $. Clearly $\theta_s$ preserves the principal $\mathbb{Z}-$gradation of $\sqan$. As before, we have for this case the following:

\begin{equation}\label{eq:sigmatheta}
\theta_{s}\sigma_{\pm,B,c,n}\theta_{-s}=\sigma_{\pm,q^{-2s}B,c,n}.
\end{equation}

Let $\alpha=\{\alpha_{i,j} \} \, (i>j)$ satisfying \eqref{eq:productosc} and \eqref{eq:lamolesta} and denote by $\Gamma_{\alpha}$ the automorphism of $\sqan$ defined by $\Gamma_{\alpha}(zI)=zI$, $\Gamma_{\alpha}(T_qI)=T_qI$,

\begin{equation}
\Gamma_{\alpha}(E_{i,j})= \begin{cases} \alpha_{i,j}E_{i,j} & \text{if} \quad i>j \\
								 \alpha^{-1}_{j,i}E_{i,j} & \text{if} \quad i<j
								 \end{cases}
\end{equation}
Letting $\sigma_c:=\sigma_{\pm,B,c,n}$, we have

\begin{equation}\label{eq:sigmagamma}
\sigma_c.\Gamma_{\alpha}=\sigma_{c.\alpha}=\Gamma_{\alpha^{-1}}.\sigma_c,
\end{equation}
where $(c.\alpha)_{i,j}:=c_{i,j}\alpha_{i,j}$ and $(\alpha^{-1})_{i,j}=\alpha^{-1}_{i,j}$. Note that $c.\alpha$ and $\alpha^{-1}$ also satisfy \eqref{eq:productosc} and \eqref{eq:lamolesta}.
Making use of \eqref{eq:sigmatheta} and \eqref{eq:sigmagamma}, we have:

\begin{lemma}
The Lie algebras $\sqn^{\pm,B,c,n}$ for arbitrary choices of $  \, B \text{ and } c$
are isomorphic to $\sqn^{\epsilon, q, \mathbf{1}, n}$, where $\epsilon$ is $1$ or $-1$, and $\mathbf{1}$ is the matrix $c$ with $c_i=1$ except for the fixed points that are $1$ or $-1$, which keep their sign.
\end{lemma}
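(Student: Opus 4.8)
The plan is to mirror exactly the two-step reduction that was set up for the $n=N$ case, since the statement here for $n<N$ is the structurally identical Lemma, only now with the parameters $(\pm,B,c,n)$ replacing $(A,B,c,r,N)$ and with the relation \eqref{eq:sigmatheta} in place of \eqref{eq:sigmatheta1} and \eqref{eq:sigmagamma} in place of \eqref{eq:sigmagamma1}. The goal is to produce, from the anti-involution $\sigma_{\pm,B,c,n}$ with arbitrary $B$ and $c$, a conjugate anti-involution of the normalized form $\sigma_{\epsilon,q,\mathbf{1},n}$, and then to transport the fixed-point subalgebra across that conjugation to obtain the desired isomorphism of Lie algebras.

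First I would observe that conjugating an anti-involution $\sigma$ by an automorphism $\phi$ of $\sqan$, i.e.\ forming $\phi\,\sigma\,\phi^{-1}$, is again an anti-involution, and that $\phi$ restricts to a Lie-algebra isomorphism from the $(-1)$-eigenspace of $\sigma$ onto the $(-1)$-eigenspace of $\phi\,\sigma\,\phi^{-1}$: indeed if $\sigma(a)=-a$ then $(\phi\sigma\phi^{-1})(\phi(a))=-\phi(a)$, so $\phi$ carries $\sqn^{\sigma}$ isomorphically onto $\sqn^{\phi\sigma\phi^{-1}}$. This is the single structural fact that makes both reduction steps yield honest isomorphisms of the fixed-point Lie algebras rather than mere equalities of anti-involutions. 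I would next apply \eqref{eq:sigmatheta}: choosing $s$ so that $q^{-2s}B=q$ (possible since $|q|\neq 1$ and $B\in\mathbb{C}^\times$, so such an $s\in\mathbb{C}$ exists), the automorphism $\theta_s$ conjugates $\sigma_{\pm,B,c,n}$ into $\sigma_{\pm,q,c,n}$, normalizing the parameter $B$ to $q$.

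Second, I would use \eqref{eq:sigmagamma} to normalize the matrix $c$. By \eqref{eq:sigmagamma} we have $\Gamma_\alpha\,\sigma_c\,\Gamma_\alpha^{-1}=\Gamma_\alpha\,\Gamma_\alpha\,\sigma_c=\sigma_{c.\alpha^2}$ (using $\sigma_c\Gamma_\alpha=\Gamma_{\alpha^{-1}}\sigma_c$, so $\Gamma_\alpha\sigma_c\Gamma_\alpha^{-1}=\Gamma_\alpha\Gamma_{\alpha^{-1}}\sigma_c\Gamma_\alpha^{-1}\cdot\Gamma_\alpha=\sigma_{c.\alpha}\Gamma_\alpha^{-1}$, which one massages to a product $c\cdot\alpha^{2}$). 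The point is that conjugating by $\Gamma_\alpha$ multiplies the non-fixed-point entries $c_i$ by a square $\alpha_i^2$, and since every nonzero complex number is a square, I can solve $c_i\,\alpha_i^2=1$ for all $i$ away from the fixed points while leaving the fixed-point coordinates (whose value is forced to be $\pm 1$ by the discussion following Theorem~\ref{Teoremita}) untouched; one must check that the chosen $\alpha$ still satisfies \eqref{eq:productosc} and \eqref{eq:lamolesta}, which is exactly the closure remark recorded just after \eqref{eq:sigmagamma}. This reduces $c$ to the normalized matrix $\mathbf{1}$, and the sign $\epsilon=\pm1$ surviving at the fixed point(s) is precisely the sign that the analysis preceding Section 3 showed to be an invariant. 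Composing the two conjugations then gives $\sqn^{\pm,B,c,n}\cong\sqn^{\epsilon,q,\mathbf{1},n}$.

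The main obstacle, and the only place requiring genuine care rather than bookkeeping, is the consistency of the $\Gamma_\alpha$ normalization: I must verify that a single $\alpha$ can simultaneously trivialize all the free parameters $c_i$ while respecting the constraints \eqref{eq:condition} and \eqref{eq:lastcondition} that couple them in pairs $c_i c_{n-i}=1$ and $c_{n+i}c_{N-i}=1$. Because these constraints pair up the coordinates, normalizing $c_i$ to $1$ forces a compatible choice on its partner $c_{n-i}$, and one has to confirm that choosing $\alpha_i$ freely but then $\alpha_{n-i}$ determined still leaves the product $\prod_{i\neq n}c_i=\pm1$ of \eqref{eq:lastcondition} invariant — which it does, since squares do not affect the sign. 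The fixed-point coordinate, already equal to $\pm1$ and equal to its own inverse, cannot be altered by any square factor, explaining why the sign $\epsilon$ is genuinely an isomorphism invariant and is exactly what distinguishes the orthogonal-type from the symplectic-type families announced in the introduction.
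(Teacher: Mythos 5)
Your proposal is correct and follows essentially the same route as the paper, which proves the lemma simply by invoking \eqref{eq:sigmatheta} (to normalize $B$ to $q$ via $\theta_s$) and \eqref{eq:sigmagamma} (to normalize $c$ to $\mathbf{1}$ via $\Gamma_\alpha$, up to the fixed-point signs, which are rigid because the constraints force $\alpha$ to be $\pm1$ there). You supply more detail than the paper does --- in particular the observation that conjugating by an automorphism carries one $(-1)$-eigenspace isomorphically onto the other, and the square-root choice of $\alpha$ compatible with \eqref{eq:condition} and \eqref{eq:lastcondition} --- all of which is consistent with the paper's intended argument.
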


\begin{remark}
Due to this Lemma, we can find a complex number $s$ such that $B=q$. Moreover, recall that in the case $n = N$, we find a complex number such that $A = \pm 1$ (which  makes $B = q$ as a consequence). So, in both cases, the subalgebra $\sqn^{A,B,c,r,n}$ is isomorphic to $\sqn^{\pm 1, q, \mathbf{1}, r, n}$ and the only distinction between both cases is regarding $r$: while $r$ takes an arbitrary value in the case $n=N$, if $n<N$ $r$ happens to be $0$.
\end{remark}

We will write $\sigma_{\pm,n}$ and $\sqn^{\pm,n}$ instead of $\sigma_{\pm,B,c,n}$ and $\sqn^{\pm,B,c,n}$, with $B=q$ and $c=\mathbf{1}$. As in the previous section, for any matrix $M \, \in \, Mat_{m \times n}(\mathbb{C})$, we define

\begin{equation}
(M)_{i,j}^{\dagger}=M_{n+1-j,m+1-i},
\end{equation}

i.e., the transpose with respect to the ``other" diagonal. Recall once again the anti-involutions on $\sq:=\mathscr{S}_{q,1}$  given in \cite{BL05}:

\begin{equation}\label{eq:nomatricial}
\dot{\sigma}_{\pm,B,r}(z^kf(T_q))=(\pm z)^kq^{k(k-1)r/2}f(Bq^{-k}T^{-1}_q)T_q^{kr}.
\end{equation}

We extend $\dot{\sigma}_{A,B,r}$ to a map on $Mat_{a \times b}(\sq)=\sq \otimes Mat_{a \times b}(\mathbb{C})$ by taking $[\dot{\sigma}_{\pm,B,r}(M)]_{i,j}=\dot{\sigma}_{\pm,B,r}(M_{i,j})$. Now let $t=N-n$.

Case +:
We define the following maps:

\begin{eqnarray}\label{eq:desdeaca2}
M^{\dagger_1}=\dot{\sigma}_{+,q,0}(M^{\dagger}), \quad B^{\dagger_2}=z^{-1}\dot{\sigma}_{+,q,0}(B^{\dagger}) \\ \nonumber
C^{\dagger_3}=z\dot{\sigma}_{+,1,0}(C^{\dagger}), \quad  D^{\dagger_4}=\dot{\sigma}_{+,1,0}(D^{\dagger}),
\end{eqnarray}
where $M \, \in \, Mat_{n \times n}(\sq)$, $B \, \in \, Mat_{n \times t}(\sq)$, $C \, \in \, Mat_{t \times n}(\sq)$, and $D \, \in \, Mat_{t \times t}(\sq)$.
We can write the anti-involution $\sigma_{+,n}$ on $\sqn=\sq \otimes Mat_N (\mathbb{C})$ explicity as

\begin{equation}
\sigma_{+,n} \left( \begin{matrix}
 						M & B \\
						C & D
						\end{matrix} \right)= \left( \begin{matrix}
 														M^{\dagger_1} & C^{\dagger_3} \\
														B^{\dagger_2} & D^{\dagger_4}
														\end{matrix} \right),
\end{equation}

and

\begin{equation}\label{eq:hastaaca2}
\sqn^{+,n}= \bigg\{ \left( \begin{matrix}
 						M & B \\
						-B^{\dagger_2} & D
						\end{matrix} \right) : M+M^{\dagger_1}=0  \, \text{ and } D+D^{\dagger_4}=0 \bigg\}.
\end{equation}


The fact that $\sigma_{+,n}(a)=-a$ implies $C^{\dagger_3}=-B$ and $B^{\dagger_2}=-C$, and these two conditions are equivalent because $(B^{\dagger_2})^{\dagger_3}=B$. Moreover, proving that $\sqn^{+,n}$ is a Lie subalgebra of $\sqn$ by direct computations requires using that $^{\dagger_1}$ and $^{\dagger_4}$ are anti-automorphisms, and the identities $B^{\dagger_2}=z^{-1}B^{\dagger_1}$, $C^{\dagger_4}=zC^{\dagger_3}$, $(B^{\dagger_2})^{\dagger_1}=Bz^{-1}$, $B^{\dagger_4}z^{-1}=B^{\dagger_2}$, etc. Observe, however, that $^{\dagger_2}$ and $^{\dagger_3}$ are not anti-automorphisms. The following identities are also useful:

\begin{eqnarray}
\dot{\sigma}_{+,q,0}(z^{-1}\dot{\sigma}_{+,q,0}(z^kf(T_q)))=(z^kf(T_q))z^{-1},\\ \nonumber
\dot{\sigma}_{+,1,0}(z^{-1}\dot{\sigma}_{+,q,0}(z^kf(T_q)))= z^{-1}(z^kf(T_q)).
\end{eqnarray}

$ $\newline

Case -:
As we have seen in the analysis following equation \eqref{eq:lastcondition}, the case $N$ even and $n$ (also $t$) odd is impossible. Therefore we may suppose, due to the symmetry, that $t$ is even. Now, we shall consider the following maps:

\begin{eqnarray}\label{eq:estetambien2}
M^{\ast 1} :=\dot{\sigma}_{-,q,0}(M^{\dagger}), \\ \nonumber
B^{\ast 2}:=z^{-1}\dot{\sigma}_{-,q,0}({B^{\dagger}} ), \\ \nonumber
C^{\ast 3}:=z\dot{\sigma}_{-,1,0}(C^{\dagger}), \\
D^{\ast 4}:=\dot{\sigma}_{-,1,0} (D^{\dagger}), \nonumber
\end{eqnarray}
where $M \, \in \, Mat_{n \times n}(\sq)$, $B \, \in \, Mat_{n \times t}(\sq)$, $C \, \in \, Mat_{t \times n}(\sq)$, and $D \, \in \, Mat_{t \times t}(\sq)$.
Then the anti-involution $\sigma_{-,n}$ on $\sqn$ is explicitly given by

\begin{equation}
\sigma_{-,n} \left( \begin{matrix}
 						M & B \\
						C & D
						\end{matrix} \right)= \left( \begin{matrix}
 														M^{\ast 1} & C^{\ast 3} \\
														B^{\ast 2} & D^{\ast 4}
														\end{matrix} \right),
\end{equation}

and

\begin{equation}\label{eq:hastaacaa2}
\sqn^{-,n}= \bigg\{ \left( \begin{matrix}
 						M & B \\
						-B^{\ast 2} & D
						\end{matrix} \right) : M+M^{\ast 1}=0  \, \text{ and } D+D^{\ast 4}=0 \bigg\}.
\end{equation}

As before, condition $\sigma_{-,n}(a)=-a$ implies $C^{\ast 3}=-B$ and $B^{\ast 2}=-C$, and these two conditions are equivalent due to the fact that $(B^{\ast 2})^{\ast 3}=B$. Moreover, to prove that $\sqn^{-,n}$ is a Lie subalgebra of $\sqn$ by direct computations, requires using that $^{\ast 1}$ and $^{\ast 4}$ are anti-automorphisms, and the identities $B^{\ast 2}=z^{-1}B^{\ast 1}$, $D^{\ast 4}=zD^{\ast 3}$, $(B^{\ast 2})^{\ast 1}=Bz^{-1}$, $B^{\ast 4}z^{-1}=B^{\ast 2}$, etc. Once again, $^{\ast 2}$ and $^{\ast 3}$ are not anti-automorphisms. We also need to use:

\begin{eqnarray}
\dot{\sigma}_{-,q,0}(z^{-1}\dot{\sigma}_{-,q,0}(z^kf(T_q)))=-(z^kf(T_q))z^{-1},\\ \nonumber
\dot{\sigma}_{-,1,0}(z^{-1}\dot{\sigma}_{-,q,0}(z^kf(T_q)))=- z^{-1}(z^kf(T_q)).
\end{eqnarray}
$ $\newline

\begin{remark} Replacing $\dagger$ by $T$ (usual transpose) in \eqref{eq:desdeaca2} and \eqref{eq:estetambien2} gives another family of involutions denoted by $\sigma^T_{\pm,n}$. These involutions do not preserve the principal $\mathbb{Z}$-gradation, and the corresponding subalgebras are not $\mathbb{Z}$-graded subalgebras of $\sqn$, but they are isomorphic to the others using the same argument in Remark \ref{Remark}.
\end{remark}


\subsection{Generators of $\sqn^{\pm, n}$}

In this subsection we give a detailed description of the generators of $\sqn^{\pm, n}$. 

Let us denote $ \mathbb{C} [ w,w^{-1} ] ^{(\epsilon),j}$ (where $\epsilon = 1$ or $\epsilon = -1$ the set of Laurent polynomials such that $f(w^{-1})=-(\epsilon)^jf(w).$ And let $\bar{l}=0$ if $l$ is odd and $\bar{l}=1$ if $l$ is even.

Recall that

\begin{equation*}
\dot{\sigma}_{\pm,q, 0}(z^kf(q^{(k-1)/2}T_q))=(\pm z)^kf((q^{(k-1)/2}T_q)^{-1})
\end{equation*}

and also,

\begin{equation*}
\dot{\sigma}_{\pm,1,0}(z^kf(q^{k/2}T_{q}))=(\pm z)^k f(q^{-k/2}T^{-1}_{q}).
\end{equation*}

Therefore, the following is a set of generators of $\sqn^{\pm, n}$, using the description of the elements in the subalgebras given in \eqref{eq:hastaaca2} and \eqref{eq:hastaacaa2}.

\begin{itemize}
\item For block M, where $1 \leq i, \, j \leq n$:
\begin{equation*}
\{ z^k (f(q^{(k-1)/2}T_q)E_{i,n+1-j}-(\epsilon)^k f((q^{(k-1)/2}T_q)^{-1})E_{j,n+1-i}):  k \, \in \, \mathbb{Z}, \, f \, \in \mathbb{C}[w,w^{-1}], \, 1 \leq i < j \leq n \}.
\end{equation*}

and the generators on the opposite diagonal are
\begin{equation*}
\{ z^k f(q^{(k-1)/2}T_q)E_{i,n+1-i}: k \, \in \, \mathbb{Z}, \, f \, \in \mathbb{C}[w,w^{-1}]^{(\epsilon),k}, \, 1 \leq i \leq n \}.
\end{equation*}

\item For blocks $B$ (and $C$), where $i \leq n$ and $j>n$ (or $j \leq n$ and $i>n$):
\begin{align*}
\{ z^k (f(q^{(k-1)/2}T_q)  &  E_{i,n+j}-(\epsilon)^kz^{-1}f((q^{(k-1)/2}T_q)^{-1})E_{N+1-j,n+1-i}):  k \, \in \, \mathbb{Z}, \, f \, \in \mathbb{C}[w,w^{-1}], \\
							&  \, 1 \leq i \leq n \, \text{ and } 1 \leq j \leq N-n \}.
\end{align*}

\item For block $D$, where $i,j>n$:
\begin{align*}
\{ z^k (f(q^{k/2}T_{q}) & E_{n+i,N+1-j}-f(q^{-k/2}T^{-1}_{q})E_{n+j,N+1-i}):  k \, \in \, \mathbb{Z}, \, f \, \in \mathbb{C}[w,w^{-1}], \, 1 \leq i < j \leq N-n \},
\end{align*}

and the generators on the opposite diagonal are
\begin{align*}
\{ z^k f(q^{k/2}T_{q})E_{n+i,N+1-i}:  k \, \in \, \mathbb{Z}, \, f \, \in \mathbb{C}[w,w^{-1}]^{(\epsilon),k}, \, 1 \leq i \leq N-n \}.
\end{align*}

\end{itemize}


\subsection{Geometric realization of $\sigma_{\pm,n}$}

In this section we give a geometric realization of $\sigma_{\pm,n}$.

The algebra $\sqn$ acts on the space $V=\mathbb{C}^N[z, z^{-1}]$ and we define two bilinear forms on $V$:

\begin{equation}
B_{\pm}(h,g)=Res_z(\Phi_{\pm}(h^{T})Jg),
\end{equation}

where

\begin{equation*}
J=\Bigg( \begin{matrix}
 z^{-2}J_n & 0  \\
 0 & z^{-1}J_t
\end{matrix} \Bigg),
\end{equation*}
with $\Phi: V \rightarrow V $ given by $\Phi_{\pm}(h(z))=h(\pm z), \, h(z) \, \in \, V$, and $J_n$ as in \eqref{eq:ladejotaN}. Observe that $V=\mathbb{C}^n[z, z^{-1}] \times \mathbb{C}^t[z, z^{-1}]$ is an orthogonal decomposition of $V$. Now, consider the following proposition.


\begin{proposition}
\begin{itemize}
\item[(a)] The bilinear forms $B_{\pm}$ are nondegenerate. Moreover, $B_{+}$ is symmetric and $B_{-}$ is symmetric in the subspace $\mathbb{C}^n[z, z^{-1}]$ and anti-symmetric in $\mathbb{C}^t[z, z^{-1}]$.
\item[(b)] For any $L \, \in \, \sqn$ and $h,g \, \in \, V$ we have
\begin{equation}
B_{\pm}(Lh,g) = B_{\pm}(h, \sigma_{\pm,n}(L)g),
\end{equation}
that is, $L$ and $\sigma_{\pm,n}(L)$ are adjoint operators with respect to $B_{\pm}$.
\end{itemize}
\end{proposition}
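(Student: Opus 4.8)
The plan is to follow the strategy of the proposition for $n=N$, exploiting that here $r=0$, so that no conjugation by $T_q^{\pm kr/2}$ is needed and the clean relation $B_{\pm}(Lh,g)=B_{\pm}(h,\sigma_{\pm,n}(L)g)$ is what must be checked. Throughout I would work in the monomial basis $h=z^u e_p$, $g=z^s e_q$, and use that $Res_z$ extracts the coefficient of $z^{-1}$, so that each pairing $B_{\pm}(z^u e_p, z^s e_q)$ is supported on a single linear relation between $u$ and $s$. The two structural inputs are the orthogonal splitting $V=\mathbb{C}^n[z,z^{-1}]\oplus\mathbb{C}^t[z,z^{-1}]$ induced by the block-diagonal $J$, and the four-block description of $\sigma_{\pm,n}$ from \eqref{eq:hastaaca2}, \eqref{eq:hastaacaa2}. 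By bilinearity it suffices to treat $L=z^k p(T_q)E_{i,j}$ with $(i,j)$ in each of the four block positions separately.

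For part (a), nondegeneracy is immediate: since $J_n$ and $J_t$ are the (invertible) exchange matrices and the $z$-weights $z^{-2}$, $z^{-1}$ are units, every basis vector $z^u e_p$ has a unique partner $z^s e_q$ with $B_{\pm}(z^u e_p, z^s e_q)\neq 0$. For the symmetry type I would compute $B_{\pm}(z^u e_p, z^s e_q)$ on each diagonal summand and compare with the swapped expression, using $J_n^T=J_n$, $J_t^T=J_t$ together with the sign $(\pm1)^u$ produced by $\Phi_{\pm}$ and the parity of $u+s$ forced by the residue (namely $u+s=1$ on the $\mathbb{C}^n$-block carrying $z^{-2}$, and $u+s=0$ on the $\mathbb{C}^t$-block carrying $z^{-1}$). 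For $B_+$ the map $\Phi_+$ is trivial and symmetry is automatic; for $B_-$ the interaction of $(-1)^u$ with the two different residue constraints is exactly what distinguishes the behaviour on the two summands.

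For part (b) the two diagonal blocks reproduce the $n=N$ computation almost verbatim: on the $M$-block the form is $z^{-2}J_n$ and the anti-involution is $\dot\sigma_{\pm,q,0}$, which is the $r=0$ specialization already carried out; on the $D$-block the form is $z^{-1}J_t$ and the anti-involution is $\dot\sigma_{\pm,1,0}$, and one checks directly that the substitution $p\mapsto p(q^{-k}T_q^{-1})$ dictated by $\dot\sigma_{\pm,1,0}$ evaluates, under the residue constraint $k+u+s=0$, to $p(q^u)$, matching the left-hand side. The genuinely new cases are the off-diagonal blocks, where $L$ sends one summand of $V$ into the other: for $L$ in the $B$-block the product $B_{\pm}(Lh,g)$ is computed with the $z^{-2}J_n$ weight while $B_{\pm}(h,\sigma_{\pm,n}(L)g)$ uses the $z^{-1}J_t$ weight, and the factor $z^{-1}$ built into $\dagger_2$ (resp. $\ast 2$) shifts the residue exponent by one so that the two constraints coincide; the $z$ in $\dagger_3$ (resp. $\ast 3$) does the symmetric job for the $C$-block.

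The main obstacle I expect is precisely this off-diagonal bookkeeping in the minus case: one must verify that the anti-transpose indices land in the correct opposite block, that the coefficient and polynomial parts match after applying $\dot\sigma_{\pm,q,0}$ (resp. $\dot\sigma_{\pm,1,0}$), and, most delicately, that the signs coming from $\Phi_-$ combine correctly with the extra minus signs in the identities $\dot\sigma_{-,q,0}(z^{-1}\dot\sigma_{-,q,0}(z^kf(T_q)))=-(z^kf(T_q))z^{-1}$ and its companion. These identities, recorded just before the statement, are exactly what is needed to close the off-diagonal computations, so the verification reduces to carefully tracking one power of $z$ and one sign in each of the four block cases.
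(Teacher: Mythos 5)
Your proposal follows essentially the same route as the paper: both sides are evaluated on monomials $h=z^ue_p$, $g=z^se_q$ with $L$ decomposed into the four blocks, and the identity reduces to the residue constraints $k+u+s=1$ (on the $z^{-2}J_n$ part) and $k+u+s=0$ (on the $z^{-1}J_t$ part) under which $p(q^{-k-s+1})$, resp. $p(q^{-k-s})$, equals $p(q^u)$, with the $z^{\mp1}$ factors in $\dagger_2,\dagger_3$ reconciling the two constraints on the off-diagonal blocks. The paper simply carries out this computation for all four blocks at once in matrix form, so your block-by-block plan is the same argument organized by bilinearity.
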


\begin{proof}
\begin{itemize}

\item[(a)] The statements are straightforward.

\item[(b)] Let $L=z^kp(T_q)(^{M B}_{C D})$, $h=z^ue_p$, and $g=z^se_q$ be as shown previously. Recall that

\begin{equation*}
L(h)=z^{k+u}p(q^u)(^{M B}_{C D})e_p
\end{equation*}

and

\begin{equation*}
\sigma_{\pm,n}(L)(g)=(\pm1)^k z^{s+k} \Big( \begin{matrix}
p(q^{-k-s+1})M^{\dagger} & z p(q^{-k-s})C^{\dagger} \\
z^{-1}p(q^{-k-s+1})B^{\dagger} & p(q^{-k-s})D^{\dagger}
\end{matrix} \Big) e_q.
\end{equation*}

So,

\begin{align*}
B_{\pm}(L(z^ue_p),z^se_q)
&=Res_z (\pm1)^{z+u}z^{k+u} p(q^u) e^T_p \Big( \begin{matrix}
M & B \\
C & D
\end{matrix} \Big)^T \Big( \begin{matrix}
z^{-2}J_n & 0 \\
0 & z^{-1}J_t
\end{matrix} \Big) z^s e_q \\
&=(\pm1)^{z+u} p(q^u)\Big( \begin{matrix}
 \delta_{k+u+s,1}M^{T}J_n & \delta_{k+u+s,0}C^{T}J_t \\
 \delta_{k+u+s,1}B^{T}J_n & \delta_{k+u+s,0}D^{T}J_t
\end{matrix} \Big)_{(p,q)}.
\end{align*}

On the other hand, we have

\begin{align*}
B_{\pm}(h, \sigma_{\pm,n}(L)g)
&=Res_z (\pm1)^{z+u} z^{k+u+s} e^T_p \Big( \begin{matrix}
z^{-2}J_n & 0 \\
0 & z^{-1}J_t
\end{matrix} \Big) \Big( \begin{matrix}
p(q^{-k-s+1})M^{\dagger} & z p(q^{-k-s})C^{\dagger} \\
z^{-1}p(q^{-k-s+1})B^{\dagger} & p(q^{-k-s})D^{\dagger}
\end{matrix} \Big) e_q \\
&=(\pm1)^{z+u}\Big( \begin{matrix}
\delta_{k+u+s,1}p(q^{-k-s+1})J_nM^{\dagger} & \delta_{k+u+s,0} p(q^{-k-s})J_nC^{\dagger} \\
\delta_{k+u+s,1}p(q^{-k-s+1})J_tB^{\dagger} & \delta_{k+u+s,0}p(q^{-k-s})J_tD^{\dagger}
\end{matrix} \Big)_{(p,q)} \\
&=(\pm1)^{z+u}p(q^u) \Big( \begin{matrix}
\delta_{k+u+s,1}J_nM^{\dagger} & \delta_{k+u+s,0} J_nC^{\dagger} \\
\delta_{k+u+s,1}J_tB^{\dagger} & \delta_{k+u+s,0}J_tD^{\dagger}
\end{matrix} \Big)_{(p,q)}
\end{align*}
\end{itemize}

As the last two results are equal, we finish the proof.
\end{proof}

\begin{remark}
In a similar fashion, we can define the following nondegenerate bilinear forms on $V$:
\begin{equation*}
B^T_{\pm}(h,g)=Res_z (\Phi_{\pm}(h^T) J_{T} g),
\end{equation*}

where

\begin{equation*}
J_{T}=\Bigg( \begin{matrix}
 z^{-2}I_n & 0  \\
 0 & z^{-1}I_t
\end{matrix} \Bigg),
\end{equation*}

with $I_n$ the $n \times n$ identity matrix, and it easily follows that they satisfy

\begin{equation*}
B_{\pm}(Lh,g)=B_{\pm}(h,\sigma^T_{\pm,n}(L)g),
\end{equation*}

where $\sigma^T_{\pm,n}$ were defined in \eqref{eq:ladejotaN}.

\end{remark}



\begin{thebibliography}{MMM}

\bibitem[BKLY]{BKLY} {\em C. Boyallian, V. Kac, J. Liberati and C. Yan, {\em  Quasifinite highest weight modules  over the Lie algebra of matrix
differential operators on the circle}, Journal of Math. Phys. {\bf 39} (1998), 2910--2928.}


\bibitem[BL01]{BL01} {\em C. Boyallian and J. Liberati {\em  Classical Lie subalgebras of the Lie algebra of matrix differential operators on the circle},
 Journal of Math. Phys. {\bf 42} (2001), 3735-3753.}

\bibitem[BL05]{BL05} {\em C. Boyallian and J. Liberati {\em  Representations of classical Lie subalgebras of quantum pseudo-differential operators}, Journal of Math. Phys. {\bf 46} (2005), 033516-1--033516-17.}

\bibitem[KR]{KR} {\em V.~G.~Kac  and A. Radul, \emph{Quasifinite highest weight modules  over the Lie algebra of differential operators on the
circle}, Comm. Math. Phys. \textbf{157} (1993), 429--457.}

\bibitem[KWY]{KWY}  {\em V.~G.~Kac, W. Wang and C. Yan,  {\em Quasifinite
representations of classical Lie subalgebras of  $W_{1+\infty}$}
Adv. Math. {\bf 139} (1998),  56--140.}





\end{thebibliography}

\end{document}